\documentclass[letterpaper, 10 pt, conference]{ieeeconf}
\IEEEoverridecommandlockouts                              
\overrideIEEEmargins

\usepackage{cite}
\usepackage{amsmath,amssymb,amsfonts}
\usepackage{siunitx}

\usepackage{tikz}
\usetikzlibrary{positioning, arrows.meta, patterns}
\usepackage{tikz-3dplot}

\usepackage{graphicx}
\usepackage{textcomp}
\usepackage{xcolor}
\usepackage{algorithm}
\usepackage{algpseudocode}
\def\BibTeX{{\rm B\kern-.05em{\sc i\kern-.025em b}\kern-.08em
    T\kern-.1667em\lower.7ex\hbox{E}\kern-.125emX}}

\usepackage[top=0.75in, bottom=0.8in, left=0.76in, right=0.76in]{geometry}

\usepackage{tikz}
\usetikzlibrary{shapes.geometric} 

\usepackage{subcaption} 

\newtheorem{assum}{Assumption}
\newtheorem{prop}{Proposition}
\newtheorem{thm}{Theorem}
\newtheorem{lem}{Lemma}
\newtheorem{cor}{Corollary}
\newtheorem{rem}{Remark}

\usepackage{mathrsfs}

\title{\LARGE \bf
Time-Optimal Control of Finite Dimensional Open Quantum Systems via a Model Predictive Strategy
}


\author{Yunyan Lee, Ian R. Petersen~and~Daoyi Dong
\thanks{Yunyan Lee and Ian R. Petersen are with the School of Engineering, The 
Australian National University, Canberra, ACT 2601, Australia. (email: {\tt\small Yun-Yan.Lee@anu.edu.au, ian.petersen@anu.edu.au})
} 
\thanks{Daoyi Dong is with the Australian Artificial Intelligence Institute, Faculty of Engineering and Information Technology, University of Technology Sydney, NSW 2007, Australia.(email: {\tt\small daoyidong@gmail.com})
}
}

\begin{document}
\maketitle

\begin{abstract}
To mitigate dissipative effects from environmental interactions and efficiently stabilize quantum states, time-optimal control has emerged as an effective strategy for open quantum systems. This paper extends the framework by incorporating Positive Operator-Valued Measures (POVMs) into the control process, enabling quantum measurements to guide control updates at each step. To address uncertainties in measurement outcomes, we derive a lower bound on the probability of obtaining a desired outcome from POVM-based measurements and establish stability conditions that ensure a monotonic decrease in the cost function. The proposed method is applied to finite-level open quantum systems, and we also present a detailed analysis of two-level systems under depolarizing, phase-damping, and amplitude-damping channels. Numerical simulations validate the effectiveness of the strategy in preserving coherence and achieving high fidelity across diverse noise environments.
\end{abstract}

\section{INTRODUCTION}
\label{sec:introduction}
Quantum systems play a fundamental role in modern science and technology, enabling advancements in quantum computing, communication, and sensing \cite{nielsen2010quantum, gisin2007quantum, cozzolino2019high, steane1998quantum, pirandola2020advances, djordjevic2022quantum}. These systems use quantum properties such as superposition and entanglement, enabling breakthroughs in high-speed computation and precision sensing \cite{preskill2018quantum}. However, the coupling to the environment induces decoherence and dissipation, reducing quantum state fidelity and limiting performance \cite{wiseman2010quantum, schlosshauer2004decoherence, schlosshauer2019quantum}. To address this issue, quantum control techniques manipulate quantum dynamics to maintain coherence and enhance system performance \cite{dong2022quantum, dong2010quantum, altafini2012modeling, dalessandro2021introduction, James2008H, Mabuchi2008Coherent, james2021optimal, albertini2014time, grivopoulos2008optimal}. Among control techniques, time-optimal control is particularly effective as it minimizes system exposure to environmental noise, preserving coherence and improving reliability \cite{Boscain2021Introduction}.

Time-optimal control aims to transfer a quantum system to a target state as quickly as possible, minimizing decoherence and dissipation \cite{Boscain2021Introduction, Damme2017Robust, Sugny2008ptimal, Sugny2007Time, khaneja2001time, Garon2013Time, Lin2020Time}. A useful tool for designing time-optimal control fields is the Pontryagin Maximum Principle (PMP), which provides necessary conditions for optimality \cite{Boscain2021Introduction, Damme2017Robust, Sugny2007Time}. Complementing the PMP, the quantum speed limit (QSL) sets fundamental bounds on the minimum time required for a quantum state transition, providing theoretical constraints on time-optimal control \cite{deffner2017quantum, caneva2009optimal, gajdacz2015time}. Together, the PMP and QSL establish fundamental constraints for time-optimal control, addressing challenges such as nonlinearity, dissipation, and quantum measurement-induced stochasticity. 

In open quantum systems, where decoherence caused by environmental interactions can severely degrade performance, time-optimal strategies are crucial for preserving quantum state fidelity and ensuring efficient control \cite{Lin2020Time, Sugny2007Time, chen2015near}. Extending these principles to measurement-based control, we introduce a time-optimal strategy that incorporates Positive Operator-Valued Measures (POVMs) \cite{nielsen2010quantum}. This method dynamically adapts the control process based on quantum measurement outcomes and improves fidelity.

Integrating POVMs into time-optimal control enables us to guide real-time control adjustments. This approach is similar to Model Predictive Control (MPC) in classical systems. In classical control, MPC is widely used for optimizing control actions over finite time horizons while ensuring constraint satisfaction \cite{grune2017nonlinear, rawlings2017model, Berberich2021Data, Kohler2021A, Cannon2011Stochastic, Lorenzen2017Constraint}. MPC-based approaches in quantum control have been shown to enhance system stability and mitigate uncertainties \cite{clouatre2022model, goldschmidt2022model, hashimoto2017stability, hashimoto2013probabilistic, humaloja2018linear}. By iteratively updating control inputs based on the current state, MPC achieves robust performance despite uncertainties. Similarly, in quantum systems, POVM-based measurements enable adaptive control updates. This measurement-driven strategy enhances robustness by dynamically optimizing control decisions in response to quantum measurement outcomes, ensuring stability and reliability in open quantum systems.

Unlike classical systems where measurements do not affect the system trajectory, quantum measurements introduce uncertainties and can significantly alter system states \cite{nielsen2010quantum, wiseman2010quantum}. This uncertainty complicates the design and implementation of quantum control strategies. While quantum measurements disrupt system evolution, they can also enhance control effectiveness when strategically integrated \cite{zhang2017quantum, uys2018quantum, Bouten2007An, Mirrahimi2007Stabilizing}. In prior work \cite{lee2024robust}, we applied POVM-based control to a closed two-level quantum system and demonstrated its ability to enhance state stabilization. Here, we propose a time-optimal control strategy that actively incorporates quantum measurements for open quantum systems. By integrating quantum measurements into time-optimal control, our approach mitigates decoherence while maintaining fidelity, making these strategies more viable for practical quantum technologies.

A key challenge is achieving a high probability of successful state transitions after measurement. To address this, we derive a lower bound on the success probability of POVM-based measurements, ensuring reliable state transitions under measurement-based control. We further establish Lyapunov stability conditions, proving that each measurement step reduces the candidate cost function, ensuring stable control dynamics. This guarantees robust stability across open quantum systems and highlights the effectiveness of integrating time-optimal control with POVM-based measurements.

To illustrate the applicability of the proposed framework, we analyze an $N$-level open quantum system. Furhermore, we also examine specific cases, including two-level systems and common decoherence models such as depolarizing and amplitude-damping channels. For systems governed by unitary Lindblad operators, we show how time-optimal control mitigates dissipation and preserves state fidelity. Numerical simulations confirm the strategy’s robustness in mitigating decoherence and achieving high-fidelity state transitions across diverse environmental conditions.

The structure of this paper is as follows. Section~\ref{sec:open_quantum_systems} introduces the fundamental concepts of open quantum systems. Section~\ref{sec:Designing the Controller} presents the design of a time-optimal control strategy, incorporating POVM-based measurements. Section~\ref{sec:probability_bound} derives a lower bound on the success probability of achieving the desired state. Section~\ref{sec:lyapunov_analysis} provides stability analysis based on the Lyapunov criteria. Section~\ref{sec:decoherence_cases} explores specific decoherence models, highlighting stability conditions for various scenarios. Section~\ref{sec:numerical_simulations} presents numerical simulations to validate the proposed approach. Finally, Section~\ref{sec:conclusion} concludes the paper and outlines potential directions for future research.

\section{Problem Formulation}
\label{sec:open_quantum_systems}
A quantum state is described by a density matrix \(\rho\), which is a complex Hermitian matrix satisfying \(\text{Tr}(\rho) = 1\), \(\rho = \rho^\dagger\), and \(\rho \succeq 0\) (i.e., positive semidefinite). For a pure state, the density matrix takes the form \(\rho = |\psi\rangle \langle \psi|\), where \(|\psi\rangle\) is a unit vector in the underlying complex Hilbert space and \(\langle \psi|\) is its Hermitian conjugate. For a Markovian open quantum system, its dynamics of \(\rho\) can be described by the Lindblad equation~\cite{breuer2002theory}:
\begin{align}
\label{eqn:Lindblad_equation}
    \dot{\rho} = -i[H(t), \rho] + \mathbb{D}(\rho),
\end{align}
where \(H(t) = H_0 + H_u\) is the total Hamiltonian. The notation \([A, B] := AB - BA\) denotes the commutator. Here, \(H_0\) is the free Hamiltonian, and \(H_u = \sum_\mu u_\mu(t) H_\mu\) is the control Hamiltonian, with real-valued control inputs \(u_\mu(t)\) and corresponding control operators \(H_\mu\). The dissipative term \(\mathbb{D}(\rho)\) accounts for environmental interactions as follows:
\begin{align}
\label{eqn:D(rho)}
    \mathbb{D}(\rho) = \sum_{i} \gamma_i \left(L_i \rho L_i^\dagger - \frac{1}{2}\{L_i^\dagger L_i, \rho\}\right),
\end{align}
where \(L_i\) are Lindblad operators normalized such that  \(\|L_i\| := \sup_{\|\psi\| = 1} \|L_i \psi\|= 1\), and \(\gamma_i \in \mathbb{R}^+\) are the associated dissipation rates. The notation \(\{A, B\} := AB + BA\) denotes the anti-commutator. If the system is isolated from the environment, i.e., \(\mathbb{D}(\rho) = 0\), the Lindblad equation \eqref{eqn:Lindblad_equation} reduces to the von Neumann equation:
\begin{align}
\label{eqn:nominal_evolution}
    \dot{\rho} = -i[H, \rho],
\end{align}
which describes unitary evolution in closed quantum systems.

Environmental interactions governed by \eqref{eqn:Lindblad_equation} drive the quantum state toward a mixed state, characterized by \(\text{Tr}(\rho^2) < 1\). Once mixed, the state cannot evolve to a pure state through the control Hamiltonian \(H_u\). To preserve quantum coherence and fidelity, it is therefore essential to minimize the duration of exposure to decoherence. Time-optimal control offers a natural framework for this objective.

A time-optimal control problem is formulated as
\begin{align}
\label{eqn: OCP}
    J(\rho(t)) = \min_u \left( \lambda_0 \int_t^{t_f} dt + \mathcal{E}(\rho_{\text{tar}}, \rho(t_f)) \right),
\end{align}
where \(\rho(t)\) denotes the quantum state at time \(t\), \(t_f\) is the final time, and \(u\) represents the control input. The constant \(\lambda_0\) regulates the trade-off between minimizing the evolution time and achieving proximity to the target state \(\rho_{\text{tar}}\).

The terminal error measure is defined as
\begin{align}
\label{eqn: definition cost function}
    \mathcal{E}(\rho_{\text{tar}},\rho(t_f)) = \mathcal{D}^2(\rho_{\text{tar}}, \rho(t_f)),
\end{align}
where \(\mathcal{D}(\rho_{\text{tar}}, \rho(t_f))\) is the trace distance,
\begin{align} 
\label{eq:trace_norm} 
\mathcal{D}(\rho_{\text{tar}}, \rho(t_f)) = \frac{1}{2} \operatorname{Tr}\left( \left| \rho(t_f) - \rho_{\text{tar}} \right| \right),
\end{align} 
which quantifies the distance between quantum states~\cite{nielsen2010quantum}. In the special case where both \(\rho_{\text{tar}} = |\psi_{\text{tar}}\rangle \langle \psi_{\text{tar}}|\) and \(\rho(t_f) = |\psi_f\rangle \langle \psi_f|\) are pure, the terminal error reduces to the infidelity:
\begin{align}
     \mathcal{E}(\rho_{\text{tar}},\rho(t_f)) = 1 - |\langle \psi_f | \psi_{\text{tar}} \rangle|^2.
\end{align}

\section{Control Design}
\label{sec:Designing the Controller}

In this work, we adopt a measurement-based feedback strategy by employing a sequence of POVMs. After each control interval, a POVM is performed, and the resulting post-measurement state is projected back to a pure state for use in the next control step. This procedure forms a feedback loop that helps mitigate decoherence while driving the system toward the target state. The POVMs are constructed from the nominal state solved from the evolution \eqref{eqn:nominal_evolution}.

At each step \(k\), we consider the evolution over a time interval \(T_s(k) = T_{k+1} - T_k\), starting from the current state \(\rho_{T_k}\). The predicted nominal state is denoted by \(\rho_{T_s(k)|T_k}\), obtained by the nominal evolution \eqref{eqn:nominal_evolution}. Based on this state, the POVM is defined as
\begin{align}
   M_k = \left\{ \rho_{T_s(k)|T_k},\; \mathbb{I} - \rho_{T_s(k)|T_k} \right\}.
\end{align}
The measurement yields a post-measurement state \(\rho_{\text{post}}\). To ensure a pure state for the next control step, we project \(\rho_{\text{post}}\) onto the closest pure state with respect to quantum fidelity:
\begin{align}
    |\psi'_{\text{post}}\rangle = \arg\min_{|\psi\rangle} \sqrt{1 - \langle\psi|\rho_{\text{post}}|\psi\rangle}.
\end{align}
The system state is then reset to \(\rho_{T_k} \leftarrow |\psi'_{\text{post}}\rangle \langle \psi'_{\text{post}}|\), and the process is repeated. The complete control procedure is summarized in Algorithm~\ref{alg: qMPC}.

\begin{algorithm}[htbp]
\caption{q-TOPC: Quantum Time-Optimal Predictive Control}
\label{alg: qMPC}
\begin{algorithmic}[1]
\Require Initial state \(\rho_0\), target state \(\rho_{\text{tar}}\), and time sequence \(T = \{T_1, T_2, \ldots, T_N\}\)
\For{\(k = 1\) to \(N-1\)}
    \State Solve \eqref{eqn: OCP} with initial state \(\rho_{T_k}\) to obtain optimal control \(u^\star(\rho_{T_k}, t)\)
    \State Let \(T_s(k) = T_{k+1} - T_k\), and compute state \(\rho_{T_s(k)|T_k}\) via \eqref{eqn:nominal_evolution}
    \State Apply POVM \(M_k = \{\rho_{T_s(k)|T_k},\; \mathbb{I} - \rho_{T_s(k)|T_k}\}\) and obtain \(\rho_{\text{post}}\)
    \State Project to the nearest pure state:
    \[
        |\psi'_{\text{post}}\rangle = \arg \min_{|\psi\rangle} \sqrt{1 - \langle\psi|\rho_{\text{post}}|\psi\rangle}
    \]
    \State Update state: \(\rho_{T_k} \leftarrow |\psi'_{\text{post}}\rangle \langle \psi'_{\text{post}}|\)
\EndFor
\end{algorithmic}
\end{algorithm}

The time-optimal control problem \eqref{eqn: OCP} can be solved using various approaches. In particular, Lin et al.~\cite{Lin2020Time} proposed an analytical solution for two-level quantum systems based on the Pontryagin Maximum Principle (PMP), which leads to bang-bang control strategies. Their result serves as a foundation for our time-optimal formulation and motivates its integration into the MPC framework.

Fig.~\ref{fig: Diagram of qMPC} illustrates the overall structure of the algorithm \ref{alg: qMPC}, q-TOPC. Starting from the initial state \(\rho_0\), the system evolves through a sequence of time intervals \(\{T_1, T_2, \ldots, T_N\}\), with each step involving a time-optimal control action, POVM measurement, and state reset.

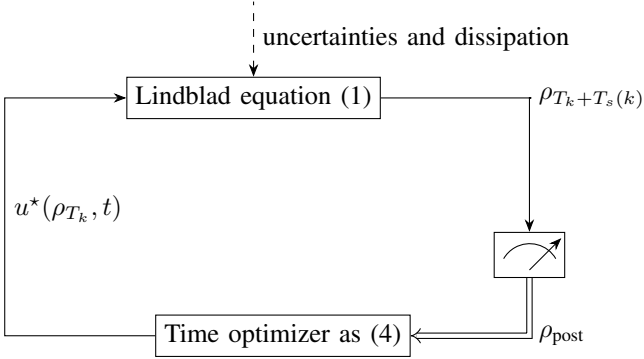
\begin{figure}[htbp]
    \centering
\begin{tikzpicture}[>=Stealth, node distance=20cm] 

\node[draw, rectangle, align=center] (optimizer) at (0,-3) { Time optimizer as \eqref{eqn: OCP}}; 
\node[draw, rectangle, align=center, above left=2.6cm and -3cm of optimizer] (plant) {Lindblad equation \eqref{eqn:Lindblad_equation}};
\node[draw, rectangle, align=center, minimum size=0.6cm, below right=1.5cm and 1.5cm of plant] (POVM) {\qquad};

\draw (POVM.center) ++(0.35,-0.05cm) arc (30:150:0.4cm);
\draw[->] (POVM.center) ++(0,-0.2cm) -- ++(45:0.6cm);

\draw[->] (optimizer.west) -- ++(-2,0) |- node[pos=0.1, right, yshift=30pt] {$u^\star (\rho_{T_k},t)$} (plant.west);
\draw[->] (plant.east) -- +(2,0) -| node[midway, right] {$\rho_{T_k+T_s(k)}$} (POVM.north);
\draw[double equal sign distance, -Implies] (POVM.south) +(0,0) |- node[midway, right] {$\rho_\text{post}$}  (optimizer.east);
\draw[dashed, <-] (plant.north) -- ++(0,1) node[midway, right, align=center] {uncertainties and dissipation};

\end{tikzpicture}
\caption{
Diagram of q-TOPC: The algorithm begins with an initial state \(\rho_0\) and a target state \(\rho_{\text{tar}}\). In each iteration, the performance index \(J (\rho(t))\) is optimized to compute the optimal control input \(u^\star(\rho_{T_k}, t)\). The system evolves according to the nominal dynamics, resulting in a state \(\rho_{T_s(k)|T_k}\). A POVM is then performed, yielding the post-measurement state \(\rho_{\text{post}}\). The closest pure state \(|\psi'_{\text{post}}\rangle\) is identified, and the system state is updated. This process is repeated, with each post-measurement state \(|\psi'_{\text{post}}\rangle\) being fed back into the optimization until it converges toward the desired target state \(\rho_{\text{tar}}\).
}
    \label{fig: Diagram of qMPC}
\end{figure}

In the following, we show that the post-measurement state \(\rho_{\text{post}}\) coincides with the nominal state \(\rho_{T_s(k)|T_k}\) with high probability. Based on this, we establish that each iteration in Algorithm~\ref{alg: qMPC} leads to a monotonic decrease in the cost function \(J\), thereby ensuring Lyapunov stability.

\section{The Minimum Success Probability}
\label{sec:probability_bound}
To analyze the stability of the system, we first quantify the probability that the measurement outcome corresponds to the nominal state under POVM measurements. This probability is given by
\begin{align}
\label{eqn:prob_POVM}
    p = 1-\text{Tr} \left( \rho_{t + T_s} \rho_{T_s|t} \right),
\end{align}
where \(p \in [0, 1]\) represents the deviation from the nominal state at each time step. For notational simplicity, we use \(\rho_t\) to denote the system state at time \(t \in [0, t_f]\), and \(\rho_{T_s|t}\) to denote the nominal state obtained by evolving \(\rho_t\) over the interval \(T_s\) according to the nominal dynamics in \eqref{eqn:nominal_evolution}. In the presence of uncertainties and dissipation, the actual system state at time \(t + T_s\) is denoted by \(\rho_{t + T_s}\), which evolves under the Lindblad equation \eqref{eqn:Lindblad_equation}.

To account for model uncertainties and environmental interactions, we consider systems affected by both Hamiltonian perturbations and dissipation. Specifically, the system is governed by an uncertain Hamiltonian \(H_\Delta\), with its operator norm bounded by \(\|H_\Delta\| \leq \bar{\Delta}\), and a dissipative term \(\mathbb{D}(\rho)\) as described in \eqref{eqn:Lindblad_equation}. Following the stochastic unraveling framework presented in \cite{yip2018quantum}, we distill the deterministic evolution from the Lindblad equation, as detailed in Appendix~\ref{sec: Appendix B}. This formulation leads to Theorem~\ref{thm: general min suc prob}, which provides a lower bound on the success probability defined in \eqref{eqn:prob_POVM}.

\begin{thm}
\label{thm: general min suc prob}
Consider an open quantum system subject to an uncertain Hamiltonian \( H_\Delta \) and a dissipation term \( \mathbb{D}(\rho) \), with dynamics governed by
\begin{align}
\label{eqn:Lindblad_uncertain}
    \dot{\rho}(t) = -i[H(t) + H_\Delta, \rho(t)] + \mathbb{D}(\rho(t)),
\end{align}
where the uncertainty satisfies \( \|H_\Delta\| = \Delta \|H'_\Delta\| \), with \( \|H'_\Delta\| = 1 \) and \( \Delta \leq \bar{\Delta} \). Assume that the overall dissipation rate encoded in \( \mathbb{D}(\rho) \) is uniformly bounded by \( \bar{\gamma} \). Then, the probability \( p \) of deviating from the nominal state at time \( t + T_s \) is bounded above by
\begin{align}  
p \leq \epsilon T_s,
\end{align}  
where \( \epsilon = 2\bar{\Delta} + \bar{\gamma} \).
\end{thm}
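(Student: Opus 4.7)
The plan is to apply the stochastic unraveling of the Lindblad equation~\eqref{eqn:Lindblad_uncertain} and to split the contributions of the Hamiltonian uncertainty \(H_\Delta\) and of the dissipator \(\mathbb{D}(\rho)\). Writing
\[
\rho_{t+T_s}\;=\; p_{\mathrm{NJ}}\,\rho_{\mathrm{NJ}} \;+\; p_{\mathrm{J}}\,\rho_{\mathrm{J}},
\qquad p_{\mathrm{NJ}}+p_{\mathrm{J}}=1,
\]
where \(\rho_{\mathrm{NJ}}\) is the normalized no-jump state generated by the non-Hermitian effective Hamiltonian \(H_{\mathrm{eff}}=H+H_\Delta-\tfrac{i}{2}\sum_i\gamma_i L_i^\dagger L_i\) and \(\rho_{\mathrm{J}}\) collects the trajectories with at least one jump on \([t,t+T_s]\), substitution into \eqref{eqn:prob_POVM} yields
\[
p \;=\; p_{\mathrm{NJ}}\bigl(1-\operatorname{Tr}(\rho_{\mathrm{NJ}}\rho_{T_s|t})\bigr)
\;+\; p_{\mathrm{J}}\bigl(1-\operatorname{Tr}(\rho_{\mathrm{J}}\rho_{T_s|t})\bigr).
\]

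The first ingredient I would establish is a bound on the jump probability. The standard trajectory identity \(p_{\mathrm{J}}=\int_{0}^{T_s}\sum_i\gamma_i \operatorname{Tr}(L_i^\dagger L_i\,\rho(t+s))\,ds\), together with the normalization \(\|L_i\|\leq 1\) (so that \(\operatorname{Tr}(L_i^\dagger L_i\rho)\leq 1\)) and the uniform rate bound \(\sum_i\gamma_i\leq\bar\gamma\) from the hypothesis, immediately gives \(p_{\mathrm{J}}\leq\bar\gamma T_s\). Combining this with the trivial bound \(1-\operatorname{Tr}(\rho_{\mathrm{J}}\rho_{T_s|t})\leq 1\) accounts for the \(\bar\gamma T_s\) term in the conclusion.

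The second ingredient is a trace-distance estimate between the no-jump state and the nominal state. Because the projection step in Algorithm~\ref{alg: qMPC} forces \(\rho_t=|\psi\rangle\langle\psi|\) to be pure, both \(\rho_{\mathrm{NJ}}\) and \(\rho_{T_s|t}=|\phi\rangle\langle\phi|\) with \(|\phi\rangle=e^{-iHT_s}|\psi\rangle\) are pure. I would then apply Duhamel's formula to \(e^{-iH_{\mathrm{eff}}T_s}-e^{-iHT_s}\), observe that the anti-Hermitian part of \(H_{\mathrm{eff}}-H\) only rescales the unnormalized vector and is already absorbed into \(p_{\mathrm{NJ}}\) under renormalization, and conclude that only the Hermitian perturbation \(H_\Delta\) contributes to the angular deviation, bounded by \(\bar\Delta T_s\). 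Converting this via the pure-state identity \(\|\,|a\rangle\langle a|-|b\rangle\langle b|\,\|_{1}\leq 2\|\,|a\rangle-|b\rangle\,\|\) yields \(1-\operatorname{Tr}(\rho_{\mathrm{NJ}}\rho_{T_s|t})\leq 2\bar\Delta T_s\).

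Putting the two estimates together in the decomposition of \(p\) gives \(p\leq 2\bar\Delta T_s+\bar\gamma T_s=(2\bar\Delta+\bar\gamma)T_s\), as claimed. I expect the main technical obstacle to be precisely the second step: the renormalization that converts the sub-normalized no-jump vector into \(\rho_{\mathrm{NJ}}\) mixes the Hermitian and anti-Hermitian parts of \(H_{\mathrm{eff}}-H\) in a nontrivial way, so some care is needed to argue that only \(H_\Delta\) survives linearly in \(T_s\) in the trace-distance estimate and that the dissipative part is fully accounted for by \(p_{\mathrm{J}}\). The jump-rate integral, the unraveling decomposition, and the Duhamel step itself are otherwise routine, and I would ultimately defer the full computation to the unraveling framework of Appendix~\ref{sec: Appendix B}.
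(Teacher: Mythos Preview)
Your unraveling decomposition and the bound \(p_{\mathrm{J}}\leq\bar\gamma T_s\) are both correct and coincide with the paper's starting point. The gap is in the second ingredient. The assertion that ``the anti-Hermitian part of \(H_{\mathrm{eff}}-H\) only rescales the unnormalized vector'' holds only when \(\sum_i\gamma_iL_i^\dagger L_i\propto\mathbb{I}\); in general the non-Hermitian piece distorts the \emph{direction} of the no-jump vector as well. A concrete counterexample: take \(H=H_\Delta=0\) and amplitude damping \(L^\dagger L=|1\rangle\langle1|\) acting on \(|\psi\rangle=(|0\rangle+|1\rangle)/\sqrt2\). Your claimed bound would force \(1-\operatorname{Tr}(\rho_{\mathrm{NJ}}\rho_{T_s|t})\leq 2\bar\Delta T_s=0\), yet the normalized no-jump state \(|\psi_{\mathrm{NJ}}\rangle\propto|0\rangle+e^{-\gamma T_s/2}|1\rangle\) satisfies \(|\langle\phi|\psi_{\mathrm{NJ}}\rangle|<1\) for any \(\gamma T_s>0\). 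So the dissipation does not split cleanly into ``\(p_{\mathrm{J}}\) carries all of \(\bar\gamma\)'' and ``only \(H_\Delta\) enters the no-jump fidelity''; a Duhamel estimate on the normalized vector unavoidably picks up \(\bar\gamma\)-dependent terms, and pushing your argument through would double-count \(\bar\gamma\) and yield a constant worse than \(2\bar\Delta+\bar\gamma\).

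The paper sidesteps precisely this issue by never separating \(p_{\mathrm{NJ}}\) from the no-jump fidelity. Instead it bounds the \emph{unnormalized} overlap \(|\langle\phi(s)|\tilde\psi(s)\rangle|^2=\operatorname{Tr}(\tilde\rho_d\,\rho_{T_s|t})\) directly via a differential inequality: both the \(H_\Delta\)-rotation and the \(\Gamma\)-induced norm decay enter the time derivative of this single quantity, contributing \(-2\bar\Delta\) and \(-\bar\gamma\) respectively, and integration gives \(|\langle\phi|\tilde\psi\rangle|^2\geq 1-(2\bar\Delta+\bar\gamma)T_s\). Dropping the (nonnegative) jump contributions then yields \(\operatorname{Tr}(\rho_{t+T_s}\rho_{T_s|t})\geq|\langle\phi|\tilde\psi\rangle|^2\) and the theorem. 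You correctly identified the renormalization step as the crux; the resolution is not to argue that ``only \(H_\Delta\) survives linearly'' but to avoid renormalizing at all and work with the sub-normalized no-jump vector throughout.
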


\begin{proof}
Based on the stochastic unraveling method \cite{yip2018quantum}, the Lindblad equation \eqref{eqn:Lindblad_uncertain} can be decomposed into a deterministic component and a jump component. Specifically, the deterministic part of the evolution can be written as
\begin{align}
    \tilde{\rho}_d = V_{\mathrm{eff}} \rho_t V_{\mathrm{eff}}^\dagger,
\end{align}
where \( V_{\mathrm{eff}} \) is the non-unitary evolution operator generated by the effective non-Hermitian Hamiltonian \( \tilde{H}_{\mathrm{eff}} \), defined as
\begin{align}
    V_{\mathrm{eff}} = \exp\left( -i \int_t^{t+T_s} \tilde{H}_{\mathrm{eff}}(s) \, ds \right),
\end{align}
with
\begin{align}
    \tilde{H}_{\mathrm{eff}}(s) = H(s) + H_\Delta - \frac{i}{2} \sum_i \gamma_i(s) L_i^\dagger L_i.
\end{align}

Following \eqref{eqn: deterministic and jump} and \eqref{eqn: no jump traj prob} in Appendix~\ref{sec: Appendix B}, the average system state over all quantum trajectories at time \( t + T_s \) is
\begin{align}
\label{eqn: state gotten from the quantum trajectory}
    \rho_{t+T_s} = \exp\left(- \int_{t}^{t+T_s} \sum_i \gamma_i (t)  d s \right) \rho_d +  \sum_i p_i \rho_{\text{jump},i},
\end{align}
where the normalized deterministic state is
\begin{align}
    \rho_d = \exp \left (\int_t^{t+T_s} \sum_i \gamma_i(s) ds \right) \tilde{\rho}_d,
\end{align}
and the jump probabilities satisfy
\begin{align}
    \sum_i p_i = 1 - \exp \left( - \int_t^{t+T_s} \sum_i \gamma_i(s) ds \right),
\end{align}
with \( \rho_{\text{jump},i} = |\psi_{\mathrm{jump},i}\rangle \langle \psi_{\text{jump},i}| \) denoting the jump states.

In the worst-case scenario, we assume that all post-jump trajectories yield states orthogonal to the nominal state \( \rho_{T_s|t} \), i.e., \( \text{Tr}(\rho_{T_s|t} \rho_{\text{jump},i}) = 0 \) for all \( i \). Consequently, based on \eqref{eqn: state gotten from the quantum trajectory}, the overall success probability is bounded below by the contribution from the deterministic evolution:
\begin{align}
    \text{Tr}(\rho_{t+T_s} \rho_{T_s|t}) \geq \text{Tr}(\tilde{\rho}_d \rho_{T_s|t}).
\end{align}

To evaluate this term, we denote \( \tilde{\rho}_d = |\tilde{\psi}_d\rangle \langle \tilde{\psi}_d| \) and \( \rho_{T_s|t} = |\psi_{T_s|t}\rangle \langle \psi_{T_s|t}| \), and obtain
\begin{align}
    \text{Tr}(\tilde{\rho}_d \rho_{T_s|t}) = |\langle \psi_{T_s|t} | \tilde{\psi}_d \rangle|^2.
\end{align}

To estimate the time variation of this quantity, we consider \( |\langle \phi(t^+) | \tilde{\psi}(t^+) \rangle|^2 \), where \( |\phi(t^+)\rangle \) and \( |\tilde{\psi}(t^+)\rangle \) evolve under the nominal and effective non-Hermitian dynamics, respectively. Taking the time derivative yields:
\begin{align}
\label{eqn: derivative of expectation}
\begin{aligned}
    &\frac{d}{dt^+} |\langle \phi(t^+) | \tilde{\psi}(t^+) \rangle|^2 \\ 
    &= 2\,\mathrm{Re} \left[ \left\langle \phi(t^+) \middle| \frac{d}{dt^+} \tilde{\psi}(t^+) \right\rangle \langle \tilde{\psi}(t^+) | \phi(t^+) \rangle \right].
\end{aligned}
\end{align}

The derivative term can be expressed as:
\begin{align}
\begin{aligned}
    &\left\langle \phi(t^+) \middle| \frac{d}{dt^+} \tilde{\psi}(t^+) \right\rangle \\ 
    &= - \langle \psi(t) | U(t^+, t)^\dagger (i H_\text{eff}(t^+)) V_\text{eff}(t^+, t) | \psi(t) \rangle \\ 
    &\quad - \frac{1}{2} \sum_i \gamma_i(t^+) \langle \psi(t) | U(t^+, t)^\dagger L_i^\dagger L_i V_\text{eff}(t^+, t) | \psi(t) \rangle.
\end{aligned}
\end{align}

Using the relation \( \langle \tilde{\psi}(t^+) | \phi(t^+) \rangle = \langle \psi(t^+) | \phi(t^+) \rangle \), where \( |\psi(t^+)\rangle \) is the normalized version of \( |\tilde{\psi}(t^+)\rangle \), we obtain:
\begin{align}
\begin{aligned}
    &\frac{d}{dt^+} |\langle \phi(t^+) | \tilde{\psi}(t^+) \rangle|^2 \\ 
    &\geq 2\,\mathrm{Re} \left[ \left\langle \phi(t^+) \middle| \frac{d}{dt^+} \tilde{\psi}(t^+) \right\rangle \langle \psi(t^+) | \phi(t^+) \rangle \right].
\end{aligned}
\end{align}

To bound this expression, we use the norm bounds \( \|H_\Delta\| = \Delta \|H'_\Delta\| \) with \( \|H'_\Delta\| = 1 \). Noting that \( \|\psi\| = 1 \) and \( \|\phi\| = 1 \), and that \( \|L_i\| = 1 \) by definition, we obtain:
\begin{align}
    \frac{d}{dt^+} |\langle \phi(t^+) | \tilde{\psi}(t^+) \rangle|^2 \geq -2\bar{\Delta} - \bar{\gamma}.
\end{align}

Integrating over the time interval \( [t, t+T_s] \), we arrive at the lower bound:
\begin{align}
    \text{Tr}(\rho_{t+T_s} \rho_{T_s|t}) \geq 1 - (\bar{\gamma} + 2\bar{\Delta}) T_s.
\end{align}
\end{proof}

\begin{rem}
In addition to the lower bound for open finite-level systems given in Theorem~\ref{thm: general min suc prob}, a corresponding result for closed systems with time-independent Hamiltonians is provided in Appendix~\ref{sec: Appendix A}, based on the approach of~\cite{childs2019nearly}.  
The case of proportional Hamiltonian noise, i.e., \(H_\Delta \propto H_0\), is also addressed for finite-level systems in~\cite{berberich2024robustness}.
\end{rem}

Theorem~\ref{thm: general min suc prob} provides a lower bound on the probability of obtaining the nominal state \(\rho_{T_s|t}\). This probability estimate will serve as a key component in establishing the Lyapunov stability of Algorithm~\ref{alg: qMPC}, which will be discussed in the next section.

\section{Stability Analysis of the Time-Optimal Control with POVM}
\label{sec:lyapunov_analysis}
To prove stability under the given cost function, we assume that the system is controllable from the initial state \( \rho_0 \) to the target state \( \rho_{\text{tar}} \) under the dissipation-free dynamics described by \eqref{eqn:nominal_evolution}. 

\begin{assum}
\label{ass: controllability}
    There exists a control law under which the system governed by \eqref{eqn:nominal_evolution} can be steered from the initial quantum state \(\rho_0\) to the target state \(\rho_{\text{tar}}\).
\end{assum}

\begin{prop}
\label{prop: Lyapunov stable}
Suppose the controller satisfies Assumption~\ref{ass: controllability}, and consider the optimal control problem \eqref{eqn: OCP}. If the measured quantum state coincides with the nominal state, then each iteration of Algorithm~\ref{alg: qMPC} satisfies
\begin{align}
    \Delta J := J(\rho_{k+1}) - J(\rho_k) \leq 2\sqrt{\bar{\epsilon} T_s} - \lambda_0 T_s.
\end{align}
In particular, if
\begin{align}
    \bar{\epsilon} \leq \frac{\lambda_0^2 T_s}{4},
\end{align}
then \( \Delta J \leq 0 \), and the cost function \( J \) is non-increasing along the closed-loop trajectory. Hence, \( J \) serves as a discrete-time Lyapunov function, establishing Lyapunov stability of the closed-loop system under Algorithm~\ref{alg: qMPC}.
\end{prop}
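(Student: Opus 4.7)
The plan is to combine Bellman's principle of optimality with a Lipschitz-type bound on the optimal cost $J$ in the trace distance, and then invoke Theorem~\ref{thm: general min suc prob} to control the deviation accumulated during one control interval of length $T_s$.

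First, I would exploit the dynamic-programming structure of \eqref{eqn: OCP}. Let $u^\star$ and $t_f^\star$ denote the optimizers of $J(\rho_{T_k})$; the corresponding nominal trajectory from $\rho_{T_k}$ passes through $\rho_{T_s(k)|T_k}$ at time $T_{k+1}$. Restricting $(u^\star, t_f^\star)$ to the tail interval $[T_{k+1}, t_f^\star]$ therefore furnishes a feasible control starting from $\rho_{T_s(k)|T_k}$, yielding the Bellman-style inequality
\begin{equation*}
J\!\left(\rho_{T_s(k)|T_k}\right) \leq J(\rho_{T_k}) - \lambda_0 T_s.
\end{equation*}

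Second, I would establish a one-sided Lipschitz bound of the form $J(\rho) - J(\rho') \leq 2\,\mathcal{D}(\rho, \rho')$. Given two initial states at trace distance $\delta$, I evaluate $J(\rho)$ along the optimizer of $J(\rho')$: because the nominal propagator in \eqref{eqn:nominal_evolution} is unitary, the two induced trajectories remain within trace distance $\delta$ at every time, and in particular at the common final time. Applying the elementary inequality $|a^2 - b^2| \leq 2|a-b|$ for $a,b \in [0,1]$ to the terminal penalty $\mathcal{E} = \mathcal{D}^2$, while noting that the time-integral contribution is unchanged, yields the desired estimate.

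Third, I would bound the trace distance $\delta = \mathcal{D}(\rho_{k+1}, \rho_{T_s(k)|T_k})$ using Theorem~\ref{thm: general min suc prob}. Since $\rho_{T_s(k)|T_k}$ is pure, the success-probability estimate $\operatorname{Tr}(\rho_{T_{k+1}} \rho_{T_s(k)|T_k}) \geq 1 - \bar{\epsilon} T_s$ translates, via the Fuchs--van de Graaf inequality, into $\mathcal{D}(\rho_{T_{k+1}}, \rho_{T_s(k)|T_k}) \leq \sqrt{\bar{\epsilon} T_s}$. Under the hypothesis that the measurement outcome coincides with the nominal POVM element and after the pure-state projection step of Algorithm~\ref{alg: qMPC}, this bound is inherited by $\rho_{k+1}$. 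Chaining the Bellman and Lipschitz inequalities produces $\Delta J \leq -\lambda_0 T_s + 2\sqrt{\bar{\epsilon} T_s}$, and the Lyapunov condition follows immediately by demanding non-positivity of the right-hand side, which rearranges to $\bar{\epsilon} \leq \lambda_0^2 T_s / 4$.

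The main obstacle I expect is the Lipschitz step, because $J$ is a minimum over both the control $u$ and the free final time $t_f$, both of which depend on the initial state. The standard workaround is to construct a feasible (not necessarily optimal) policy for the perturbed initial state by inheriting the minimizer of the unperturbed one; this gives the one-sided inequality that is enough for Lyapunov descent. One must still verify that the inherited trajectory remains admissible on $[T_{k+1}, t_f^\star]$ and that the squared terminal cost does not inflate the perturbation beyond the $2\sqrt{\bar{\epsilon} T_s}$ slack, which is exactly what the $|a^2-b^2| \leq 2|a-b|$ estimate secures.
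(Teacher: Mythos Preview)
Your three-step skeleton (dynamic programming, a Lipschitz bound on $J$ in trace distance, then Theorem~\ref{thm: general min suc prob}) is exactly the paper's route, but the states you pair in each step are miscast, and as written the argument does not close. In Proposition~\ref{prop: Lyapunov stable} the OCP~\eqref{eqn: OCP} is evaluated under the Lindblad evolution~\eqref{eqn:Lindblad_equation}---the paper's proof defines $\rho_f$ via~\eqref{eqn:Lindblad_equation}, and Corollary~\ref{cor: Lyapunov stable nominal} is singled out precisely as the special case where the OCP uses nominal dynamics. Hence the optimal trajectory from $\rho_{T_k}$ does \emph{not} pass through the nominal predictor $\rho_{T_s(k)|T_k}$ at time $T_{k+1}$; it passes through the Lindblad state $\rho_{T_{k+1}}$. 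Your Bellman inequality $J(\rho_{T_s(k)|T_k}) \le J(\rho_{T_k}) - \lambda_0 T_s$ therefore does not follow. Conversely, your Lipschitz step compares $\rho_{k+1}$ with $\rho_{T_s(k)|T_k}$; under the proposition's hypothesis these are the \emph{same} state, so that step contributes zero and the $2\sqrt{\bar\epsilon T_s}$ slack has no source in your chain.

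The repair is to swap the intermediate waypoint: apply Bellman at the Lindblad state, $J(\rho_{T_{k+1}}) \le J(\rho_{T_k}) - \lambda_0 T_s$, and apply the Lipschitz estimate between $\rho_{T_s(k)|T_k}$ (which equals $\rho_{k+1}$ by hypothesis) and $\rho_{T_{k+1}}$. The contractivity you need for the Lipschitz step is then CPTP contractivity of the Lindblad tail channel, not unitary invariance of~\eqref{eqn:nominal_evolution}; this is how the paper's proof proceeds (invoking \cite[Theorem~9.2]{nielsen2010quantum}). With these corrections, Theorem~\ref{thm: general min suc prob} plus Fuchs--van de Graaf gives $\mathcal{D}(\rho_{T_s(k)|T_k},\rho_{T_{k+1}}) \le \sqrt{\bar\epsilon T_s}$, and your $|a^2-b^2|\le 2|a-b|$ estimate on $\mathcal{E}=\mathcal{D}^2$ yields the claimed bound exactly as in the paper.
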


\begin{proof}
For a given initial state \(\rho_t\), let \(t^*_f(\rho_t)\) denote the final time obtained by solving the optimal control problem \eqref{eqn: OCP}. Similarly, when the same optimal control sequence is applied to the nominal state \(\rho_{T_s|t}\), the corresponding final time is denoted by \(t^*_f(\rho_{T_s|t})\).

Let \(\rho_f\) represent the state that starts from \(\rho_{t}\) and evolves under Lindblad equation \eqref{eqn:Lindblad_equation} at optimal final time $t^*_f(\rho_{t})$ and \(\rho_f'\) represent the state that starts from \(\rho_{T_s|t}\) and evolves under the same dynamics at optimal final time $t^*_f(\rho_{T_s|t})$.

The difference in the functional values can be expressed as:
\begin{align}
\begin{aligned}
      &J(\rho_{T_s|t}) - J(\rho_t) \\
      =& \lambda_0\left[\int_{t+T_s}^{t^*_f(\rho_{T_s|t})} dt +  \mathcal{E}(\rho_{\text{tar}},\rho_f') \right] \\
      &- \left[\lambda_0\int_t^{t+T_s} dt + \int_{t+T_s}^{t_f^\star (\rho_t)} dt + \mathcal{E}(\rho_{\text{tar}},\rho_f) \right].
\end{aligned}
\end{align}

Next, consider a suboptimal control sequence for \( \rho_{T_s|t} \) with \( u(\rho_t, t) \) derived from the optimal control sequence for \( \rho_t \) over the time interval from \( t + T_s \) to \( t_f^\star (\rho_t) \), which corresponds to the final state $\tilde \rho_f$. Then, we have:
\begin{align}
\label{eqn: suboptimal difference}
\begin{aligned}
      &J(\rho_{T_s|t}) - J(\rho(t)) \\
      &\leq -\lambda_0T_s + \mathcal{E}(\rho_{\text{tar}},\tilde \rho_f)  - \mathcal{E}(\rho_{\text{tar}},\rho_f)\\
      &\leq (\mathcal{D}(\rho_{\text{tar}},\tilde \rho_f) +\mathcal{D}(\rho_{\text{tar}},\rho_f))\mathcal{D}(\tilde \rho_f,\rho_f)  -\lambda_0T_s \\
      &\leq 2\mathcal{D}(\tilde \rho_f,\rho_f)  -\lambda_0T_s \\
      &\leq  2\mathcal{D}(\mathcal{E}_T(\rho_{T_s| t}),(\mathcal{E}_T(\rho_{t+T_s})) -\lambda_0T_s \\
      &\leq  2\mathcal{D}( \rho_{T_s| t} , \rho_{t+T_s} )  -\lambda_0T_s \\
      &\leq 2\sqrt{\bar \epsilon T_s}  -\lambda_0T_s,
\end{aligned}
\end{align}
where \( \mathcal{D}(\cdot,\cdot) \) denotes the trace norm for quantum states as defined in \eqref{eq:trace_norm}, and the channel \( \mathcal{E}_T \) follows \eqref{eqn:Lindblad_equation} with \( T = t_f^\star (\rho_{t}) - t - T_s \). The control signal used in \( \mathcal{E}_T \) is derived from the optimal cost function with the initial state \( \rho_{t+T_s} \). By the fact that trace-preserving quantum operations are contracting \cite[Theorem 9.2]{nielsen2010quantum}, it follows that for any completely positive trace-preserving (CPTP) channel,
\begin{align}
\mathcal{D}(\mathcal{E}_T(\rho_{T_s| t}),\mathcal{E}_T(\rho_{t+T_s})) \leq \mathcal{D}( \rho_{T_s| t} , \rho_{t+T_s} ).
\end{align}

The last inequality of \eqref{eqn: suboptimal difference} can be obtained from Theorem \ref{thm: general min suc prob} since, in the probability of $\text{Tr} (\rho_{t+T_s}\rho_{T_s|t})$, $\rho_{T_s|t}$ is a pure state. Thus, the fidelity is
\begin{align}
    F(\rho_{t+T_s}, \rho_{T_s|t}) = \sqrt{\text{Tr} (\rho_{t+T_s} \rho_{T_s|t})}.
\end{align}
Also, it is known from \cite[Chapter 9]{nielsen2010quantum} that
\begin{align}
    \mathcal{D}( \rho_{T_s|t}, \rho_{t+T_s} ) \leq \sqrt{1 - F( \rho_{T_s|t}, \rho_{t+T_s} )^2}
\end{align}
and consider the Lindblad equation in the optimal cost function \eqref{eqn: OCP} with an upper bound $\bar \epsilon$. Thus, to ensure Lyapunov stability, the dissipative term should satisfy:
\begin{align}
    \bar \epsilon \leq \frac{\lambda_0^2 T_s}{4}.
\end{align}
\end{proof}

To maintain high fidelity, we impose the terminal-cost bound 
\begin{align}
    \mathcal{E} \leq \bar{\epsilon} N T_s,
\end{align}
where time horizon \(N T_s\) is at least as long as the minimal time required—under Assumption~\ref{ass: controllability}—for the nominal dynamics to reach the target state.  
Theorem~\ref{thm: general min suc prob} then guarantees a success probability of at least \(1 - \bar{\epsilon} N T_s\), leading to the following corollary.

\begin{cor}
\label{cor: Lyapunov stable}
Let the controller satisfy Assumption~\ref{ass: controllability}, and consider the OCP defined in~\eqref{eqn: OCP}.  
If the terminal cost satisfies 
\begin{align}
    \mathcal{E} \leq \bar{\epsilon} N T_s,
\end{align}
and
\begin{align}
    \bar{\epsilon} \leq \frac{\lambda_0}{2\sqrt{N}},
\end{align}
then \( \Delta J \leq 0 \), and the cost function in~\eqref{eqn: OCP} is non-increasing under Algorithm~\ref{alg: qMPC}. Hence, it serves as a discrete-time Lyapunov function and establishes Lyapunov stability of the closed-loop system.
\end{cor}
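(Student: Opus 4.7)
The plan is to adapt the inequality chain from the proof of Proposition~\ref{prop: Lyapunov stable}, sharpening only one step. In that proof the factor \(\mathcal{D}(\rho_{\text{tar}},\tilde{\rho}_f)+\mathcal{D}(\rho_{\text{tar}},\rho_f)\) was bounded crudely by \(2\) using \(\mathcal{D}\leq 1\). The new terminal-cost hypothesis \(\mathcal{E}\leq \bar{\epsilon} N T_s\) together with \(\mathcal{E}=\mathcal{D}^2(\rho_{\text{tar}},\cdot)\) replaces this crude bound by \(2\sqrt{\bar{\epsilon}NT_s}\), and combining this with the per-step deviation bound from Theorem~\ref{thm: general min suc prob} should immediately produce the stated condition \(\bar{\epsilon}\leq \lambda_0/(2\sqrt{N})\).

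First, I would restart from the intermediate inequality in \eqref{eqn: suboptimal difference}, namely \(\Delta J \leq (\mathcal{D}(\rho_{\text{tar}},\tilde{\rho}_f)+\mathcal{D}(\rho_{\text{tar}},\rho_f))\,\mathcal{D}(\tilde{\rho}_f,\rho_f)-\lambda_0 T_s\). The terminal-cost hypothesis directly gives \(\mathcal{D}(\rho_{\text{tar}},\rho_f)\leq \sqrt{\bar{\epsilon}NT_s}\) and, by the same feasibility argument applied to the shifted suboptimal trajectory, \(\mathcal{D}(\rho_{\text{tar}},\tilde{\rho}_f)\leq\sqrt{\bar{\epsilon}NT_s}\) as well. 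For the remaining factor \(\mathcal{D}(\tilde{\rho}_f,\rho_f)\), I would reuse the argument of Proposition~\ref{prop: Lyapunov stable}: contractivity of the Lindblad channel (Theorem~9.2 in Nielsen--Chuang) yields \(\mathcal{D}(\tilde{\rho}_f,\rho_f)\leq \mathcal{D}(\rho_{T_s|t},\rho_{t+T_s})\), and the fidelity/trace-distance inequality together with the purity of the nominal state \(\rho_{T_s|t}\) and Theorem~\ref{thm: general min suc prob} give \(\mathcal{D}(\rho_{T_s|t},\rho_{t+T_s})\leq \sqrt{\bar{\epsilon}T_s}\).

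Multiplying the two estimates produces \(\Delta J \leq 2\sqrt{\bar{\epsilon}NT_s}\cdot\sqrt{\bar{\epsilon}T_s}-\lambda_0 T_s = (2\bar{\epsilon}\sqrt{N}-\lambda_0)T_s\), so that \(\Delta J \leq 0\) whenever \(\bar{\epsilon}\leq \lambda_0/(2\sqrt{N})\). Since the one-step bound holds uniformly in \(k\), iterating the inequality shows that \(J\) is monotonically non-increasing along the closed-loop trajectory, confirming that \(J\) acts as a discrete-time Lyapunov function and yielding Lyapunov stability of Algorithm~\ref{alg: qMPC}.

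The main obstacle is the joint applicability of \(\mathcal{E}\leq \bar{\epsilon}NT_s\) to both \(\rho_f\) and \(\tilde{\rho}_f\). For \(\rho_f\) it is immediate, but \(\tilde{\rho}_f\) is generated by the suboptimal tail of the control that is optimal for \(\rho_t\), evolving from the nominal predicted state \(\rho_{T_s|t}\). The cleanest justification is to view \(NT_s\) as the design horizon which, by Assumption~\ref{ass: controllability}, is at least the minimum nominal reaching time, so the terminal bound is enforced as a feasibility constraint of the OCP at every restart; failing that, a triangle-inequality detour \(\mathcal{D}(\rho_{\text{tar}},\tilde{\rho}_f)\leq\mathcal{D}(\rho_{\text{tar}},\rho_f)+\mathcal{D}(\tilde{\rho}_f,\rho_f)\) transfers the bound to \(\tilde{\rho}_f\) at only a \(\sqrt{\bar{\epsilon}T_s}\) additive cost, which does not alter the leading-order scaling that produces \(\bar{\epsilon}\leq \lambda_0/(2\sqrt{N})\).
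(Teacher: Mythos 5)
Your proposal is correct and follows essentially the same route as the paper's own proof: both bound each of $\mathcal{D}(\rho_{\text{tar}},\tilde\rho_f)$ and $\mathcal{D}(\rho_{\text{tar}},\rho_f)$ by $\sqrt{\bar\epsilon N T_s}$ via the terminal-cost constraint, retain the per-step bound $\mathcal{D}(\tilde\rho_f,\rho_f)\leq\sqrt{\bar\epsilon T_s}$ from Proposition~\ref{prop: Lyapunov stable}, and multiply to obtain $\Delta J \leq 2\sqrt{N}\bar\epsilon T_s - \lambda_0 T_s$. Your explicit discussion of why the terminal bound also applies to $\tilde\rho_f$ is a point the paper passes over silently, but it does not change the argument.
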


\begin{proof}
Following the proof of Proposition \ref{prop: Lyapunov stable}, in the suboptimal control case, the difference in the cost function can be expressed as:
\begin{align}
\label{eqn: constraint Lyapunov stable}
\begin{aligned}
      &J(\rho_{T_s|t}) - J(\rho(t)) \\
      &\leq -\lambda_0T_s + \mathcal{E}(\rho_{\text{tar}},\tilde \rho_f)  - \mathcal{E}(\rho_{\text{tar}},\rho_f)\\
      &\leq \left(\mathcal{D}(\rho_{\text{tar}},\tilde \rho_f) +\mathcal{D}(\rho_{\text{tar}},\rho_f)\right)\mathcal{D}(\tilde \rho_f,\rho_f)  -\lambda_0T_s\\
      &\leq 2\sqrt{N}\bar\epsilon T_s -\lambda_0T_s.
\end{aligned}
\end{align}

Since any terminal cost function is constrained by \(\epsilon N T_s\), we obtain the final inequality in \eqref{eqn: constraint Lyapunov stable}. Therefore, to satisfy Lyapunov stability, we require:
\begin{align}
    \bar\epsilon \leq \frac{\lambda_0}{2\sqrt{N}}.
\end{align}
\end{proof}

When the dissipative coefficient \(\bar{\gamma}\) is sufficiently small, it is reasonable to design the controller based solely on the nominal dynamics~\eqref{eqn:nominal_evolution}.

\begin{cor}
\label{cor: Lyapunov stable nominal}
Consider a system governed entirely by the nominal dynamics~\eqref{eqn:nominal_evolution}, including both the OCP~\eqref{eqn: OCP} and the POVM-based state evolution.  
Under Assumption~\ref{ass: controllability}, each iteration of Algorithm~\ref{alg: qMPC} yields \( \Delta J \leq -\lambda_0 T_s \). Hence, the cost function in~\eqref{eqn: OCP} serves as a discrete-time Lyapunov function and guarantees Lyapunov stability of the closed-loop system.
\end{cor}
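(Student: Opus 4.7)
The plan is to specialize the argument used in the proof of Proposition~\ref{prop: Lyapunov stable} to the purely nominal setting, where the perturbation penalty $2\sqrt{\bar{\epsilon}T_s}$ vanishes. Under nominal dynamics, the Lindblad equation reduces to the von Neumann equation~\eqref{eqn:nominal_evolution}, so $\bar{\gamma}=0$ and $\bar{\Delta}=0$, and Theorem~\ref{thm: general min suc prob} yields $\bar{\epsilon}=0$.

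First, I would verify that the POVM step in Algorithm~\ref{alg: qMPC} is transparent in this regime. Since the actual evolution coincides with the nominal evolution, we have $\rho_{t+T_s}=\rho_{T_s|t}$ exactly, and $\rho_{T_s|t}$ is pure by construction. Consequently, the success probability satisfies $\operatorname{Tr}(\rho_{t+T_s}\rho_{T_s|t})=\operatorname{Tr}(\rho_{T_s|t}^{2})=1$, so the post-measurement state, and therefore its nearest pure-state projection $|\psi'_{\mathrm{post}}\rangle\langle\psi'_{\mathrm{post}}|$, coincides almost surely with $\rho_{T_s|t}$. The measurement-and-reset step thus introduces no deviation from the predicted trajectory.

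Next, I would reuse the suboptimal-control construction from Proposition~\ref{prop: Lyapunov stable}: starting from $\rho_{T_s|t}$, apply the tail of the optimal policy for $\rho_t$ on the interval $[t+T_s,\, t^{\star}_f(\rho_t)]$. Because both candidate trajectories follow identical nominal dynamics from the same initial state $\rho_{T_s|t}$, the resulting terminal state $\tilde{\rho}_f$ is equal to $\rho_f$, and hence $\mathcal{E}(\rho_{\text{tar}},\tilde{\rho}_f)=\mathcal{E}(\rho_{\text{tar}},\rho_f)$. Plugging this into the decomposition of $J(\rho_{T_s|t})-J(\rho_t)$ derived in Proposition~\ref{prop: Lyapunov stable}, the terminal-cost difference vanishes and only the $-\lambda_0 T_s$ time saving remains, yielding $\Delta J\le -\lambda_0 T_s<0$.

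There is no substantive obstacle here: the corollary is a direct specialization of Proposition~\ref{prop: Lyapunov stable} with $\bar{\epsilon}=0$. The only point requiring care is to justify that the POVM reset is a no-op in the noise-free regime, so that the state handed to the next optimization is exactly the one predicted by the nominal evolution; once this is established, the telescoping chain of trace-distance inequalities collapses and the Lyapunov decrease follows immediately.
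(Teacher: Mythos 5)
Your proposal is correct and follows essentially the same route as the paper: reuse the suboptimal-tail construction from Proposition~\ref{prop: Lyapunov stable}, observe that without dissipation $\tilde{\rho}_f=\rho_f$ so the terminal-cost difference vanishes, and conclude $\Delta J\le-\lambda_0 T_s$. Your additional remark that the POVM reset is deterministic (success probability exactly $1$) in the nominal regime is a worthwhile explicit check that the paper leaves implicit, but it does not change the argument.
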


\begin{proof}
    Following the approach in Proposition \ref{prop: Lyapunov stable}, consider a suboptimal cost function \( J(\rho_{T_s|t}) \) for the state after the time interval \(T_s\):
    \begin{align}
    \begin{aligned}
           &J(\rho_{T_s|t}) - J(\rho_t) \\
           &\leq -\lambda_0T_s  + \mathcal{C}(\rho_{\text{tar}},\tilde \rho_f)  - \mathcal{C}(\rho_{\text{tar}},\rho_f)\\
           &\leq -\lambda_0T_s,
    \end{aligned}
    \end{align}
where \( \tilde \rho_f = \rho_f \) since we consider a suboptimal solution and the system is without dissipation. Thus, starting from the same initial state, both the suboptimal and optimal solutions lead to the same final state \( \rho_f \).

Therefore, the cost function \( J(\rho(t)) \) decreases over time, ensuring that the system is Lyapunov stable.
\end{proof}

To conclude this section, we extend the stability result from Proposition~\ref{prop: Lyapunov stable}, which assumes perfect measurement.  
The following theorem establishes that the closed loop system corresponding to Algorithm~\ref{alg: qMPC} remains Lyapunov stable in expectation under a specific condition, as stated in Theorem~\ref{thm: POVM Lyapunov stable}.

\begin{thm}
\label{thm: POVM Lyapunov stable}
Under Algorithm~\ref{alg: qMPC}, if \( \bar{\epsilon} \) satisfies
\begin{align}
    \bar{\epsilon} T_s + 2\sqrt{\bar{\epsilon} T_s} - \lambda_0 T_s \leq 0,
\end{align}
then each iteration of Algorithm~\ref{alg: qMPC} is Lyapunov stable in expectation.
\end{thm}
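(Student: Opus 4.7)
The plan is to lift Proposition~\ref{prop: Lyapunov stable} from the idealised case of perfect measurement to the stochastic setting of Algorithm~\ref{alg: qMPC}, by conditioning on the two possible POVM outcomes, bounding the cost increment on each branch separately, and combining them through the probability bound of Theorem~\ref{thm: general min suc prob}.

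Let $p=1-\operatorname{Tr}(\rho_{t+T_s}\rho_{T_s|t})$ denote the failure probability; Theorem~\ref{thm: general min suc prob} gives $p \leq \bar{\epsilon}T_s$. On the success branch (probability $1-p$) the post-measurement projected state is exactly $\rho_{T_s|t}$, so Proposition~\ref{prop: Lyapunov stable} applies verbatim and delivers $\Delta J_{\text{succ}}\leq 2\sqrt{\bar{\epsilon}T_s}-\lambda_0 T_s$. On the failure branch (probability $p$) I would re-run the suboptimal-control comparison from the proof of Proposition~\ref{prop: Lyapunov stable}, now with the projected pure state $|\psi'_{\text{post}}\rangle\langle\psi'_{\text{post}}|$ in place of $\rho_{T_s|t}$, extending the optimal control of $\rho_t$ over $[t+T_s,t_f^\star(\rho_t)]$ as a feasible control from this state. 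Because contractivity can no longer be tied back to $\mathcal{D}(\rho_{T_s|t},\rho_{t+T_s})$ as in the success branch, I would close the residual terminal-error difference using the universal bound $\mathcal{E}=\mathcal{D}^2\in[0,1]$, which yields $\Delta J_{\text{fail}}\leq 1-\lambda_0 T_s$.

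Taking expectation and inserting $p\leq\bar{\epsilon}T_s$ then gives
\begin{align*}
    \mathbb{E}[\Delta J] \leq (1-p)\bigl(2\sqrt{\bar{\epsilon}T_s}-\lambda_0 T_s\bigr) + p\bigl(1-\lambda_0 T_s\bigr) \leq \bar{\epsilon}T_s + 2\sqrt{\bar{\epsilon}T_s}-\lambda_0 T_s,
\end{align*}
where the second inequality also uses $1-2\sqrt{\bar{\epsilon}T_s}\leq 1$. Under the theorem's hypothesis the right-hand side is nonpositive, so $J$ is a discrete-time Lyapunov function in expectation and the conclusion follows. The main obstacle will be the failure branch: since the nearest-pure-state projection after the ``deviation'' POVM outcome can land arbitrarily far from $\rho_{t+T_s}$, the contractivity route used in Proposition~\ref{prop: Lyapunov stable} is unavailable, and invoking the universal infidelity bound $\mathcal{E}\leq 1$ — rather than a cruder estimate such as $2\mathcal{D}\leq 2$ — is exactly what delivers the sharp coefficient $\bar{\epsilon}T_s$ appearing in the stated stability condition; every constant-factor looseness here would weaken the condition and miss the statement.
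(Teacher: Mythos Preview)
Your proposal is correct and follows essentially the same approach as the paper: condition on success versus failure of the POVM outcome, apply the suboptimal-control comparison of Proposition~\ref{prop: Lyapunov stable} on each branch (yielding $2\sqrt{\bar\epsilon T_s}-\lambda_0 T_s$ on success and $1-\lambda_0 T_s$ on failure via the universal bound $\mathcal{E}\leq 1$), and combine using $p\leq\bar\epsilon T_s$. The paper's only cosmetic difference is that it regroups the convex combination as $(1-p)A+pB-C=(A-C)-p A+pB$ before bounding, rather than keeping your form $(1-p)(A-C)+p(B-C)$, but the ingredients and final inequality are identical.
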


\begin{proof}
Consider a quantum system evolving under a dissipative quantum operation. Let $\rho_{t+T_s}$ represent the state at time $t + T_s$ given the state at time $t$, and let $\{\rho^\perp_{i, T_s|t}\}$ be a set of states orthogonal to $\rho_{T_s|t}$. Assume that the evolution can be decomposed into a nominal evolution with probability $1-p$ and transitions to orthogonal states with probabilities $\{p_i\}$, where $\sum_i p_i = p \leq \bar{\epsilon}T_s$.  Then, the expected change in the cost function over a single time step can be bounded as follows:
\begin{align}
\label{eq:cost_change}
\begin{aligned}
&\mathbb{E}[J(\rho_{t+T_s}) - J(\rho(t))] \\
&= (1-p) J(\rho_{T_s|t}) + \sum_i p_i J(\rho^\perp_{i, T_s|t}) - J(\rho(t)) \\
&= (1-p)\lambda_0(T-T_s) + \sum_i p_i \lambda_0 (T-T_s) - \lambda_0 T  \\
&\quad + (1-p)\mathcal{C}(\rho_{\text{tar}},\tilde \rho_f)+ \sum_i p_i\mathcal{C}(\rho_{\text{tar}},\tilde \rho^\perp_{i,f}) -\mathcal{C}(\rho_{\text{tar}}, \rho_f)\\
&= -\lambda_0 T_s +\mathcal{C}(\rho_{\text{tar}},\tilde \rho_f)-\mathcal{C}(\rho_{\text{tar}}, \rho_f) \\
&\quad-p \mathcal{C}(\rho_{\text{tar}},\tilde \rho_f)+ \sum_i p_i\mathcal{C}(\rho_{\text{tar}},\tilde \rho^\perp_{i,f})\\
&\leq -\lambda_0 T_s + 2\sqrt{\bar{\epsilon}T_s} + \bar{\epsilon}T_s.
\end{aligned}
\end{align}
The last inequality follows from inequality \eqref{eqn: suboptimal difference} and the fact that the cost function is bounded by $1$. Therefore, when \( -\lambda_0T_s + 2\sqrt{\bar\epsilon T_s} + \bar\epsilon T_s \leq 0 \), we can ensure that Algorithm \ref{alg: qMPC} is Lyapunov stable in expectation.
\end{proof}

When the state follows the nominal trajectory for \(N\) steps, Algorithm~\ref{alg: qMPC} terminates once the optimal time falls below the step size \(T_s\).  
To estimate the probability of reaching the target within \(N\) steps, we refer to the analysis in \cite{lee2024robust}.  
Given the minimum success probability from Theorem~\ref{thm: general min suc prob}, with lower bound \(1 - \bar{\epsilon} T_s\), we define \(P_{\textnormal{tar}}(N)\) as the probability that the system reaches the target state by step \(N\).  
A lower bound is given by
\begin{align}
    P_{\textnormal{tar}}(N) \geq 1 - \bar{\epsilon} T_s \sum_{l=1}^{L} [1 - \bar{\epsilon} T_s]^{l-1} F_{N-l},
\end{align}
where \(F_k\) denotes the probability that the system fails to reach the target by step \(k\).  
This estimate leads to the convergence analysis below.

Following \cite{lee2024robust}, the convergence rate of \(F_N\), as defined in \cite{ortega2000iterative}, is
\begin{align}
    \eta = \sup \lim_{N \to \infty} (F_N)^{\frac{1}{N}},
\end{align}
and we establish the bound for convergence rate \(\eta\).

\begin{prop}
\label{prop:convergence_rate}
The convergence rate \(\eta\) is given in the following three cases:
\begin{enumerate}
    \item When \(1 - \bar{\epsilon} T_s < \frac{L}{L+1}\),
    \begin{align}
        \eta = \min\left\{ 1 - \alpha, \frac{2L}{L+1} - (1 - \bar{\epsilon} T_s) \right\},
    \end{align}
    where \(\alpha = \bar{\epsilon} T_s [1 - \bar{\epsilon} T_s]^L\).
    
    \item When \(1 - \bar{\epsilon} T_s > \frac{L}{L+1}\),
    \begin{align}
        \eta = \frac{2L}{L+1} - (1 - \bar{\epsilon} T_s).
    \end{align}
    
    \item When \(1 - \bar{\epsilon} T_s = \frac{L}{L+1}\),
    \begin{align}
        \eta = 1 - \bar{\epsilon} T_s.
    \end{align}
\end{enumerate}
\end{prop}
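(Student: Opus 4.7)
The plan is to turn the probabilistic bound on $P_{\textnormal{tar}}(N)$ into a linear recurrence inequality for $F_N$ and then extract $\eta = \limsup_{N\to\infty} F_N^{1/N}$ from the dominant root of its characteristic polynomial. Writing $p = \bar{\epsilon}T_s$, $q = 1 - p$, and using $F_N = 1 - P_{\textnormal{tar}}(N)$, the given lower bound rearranges to
\begin{align*}
    F_N \;\leq\; p \sum_{l=1}^{L} q^{l-1} F_{N-l},
\end{align*}
with associated characteristic polynomial $g(x) = x^L - p\sum_{l=1}^{L} q^{l-1} x^{L-l}$. Since all coefficients are nonnegative, a standard Perron--Frobenius argument on the companion matrix of the recurrence identifies the dominant root as a positive real number, so the task reduces to locating this real root.

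The central algebraic step will be a telescoping identity obtained by multiplying $g(x)$ by $(x-q)$, which collapses the sum and yields
\begin{align*}
    (x-q)\,g(x) \;=\; x^{L+1} - x^L + \alpha, \qquad \alpha = p q^L.
\end{align*}
Hence the roots of $g$ are precisely the roots of $h(x) := x^{L+1} - x^L + \alpha$ other than $x = q$ (the identity $h(q) = 0$ follows immediately from $p + q = 1$). Since $h'(x) = x^{L-1}\bigl((L+1)x - L\bigr)$, the polynomial $h$ is strictly decreasing on $(0, L/(L+1))$ and strictly increasing on $(L/(L+1),\infty)$, so its positive roots are separated by $L/(L+1)$ and coalesce exactly when $q = L/(L+1)$.

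With this structure in hand, the three cases follow by a simple split on the sign of $q - L/(L+1)$. Case 3 is immediate: when $q = L/(L+1)$, $q$ is a double root of $h$, so the remaining root of $g$ is $q$ itself and $\eta = q$. For Cases 1 and 2 I would bound the second positive root of $h$ by evaluating $h$ at the reflected point $x^\star := 2L/(L+1) - q$, the image of $q$ across the minimum of $h$; a direct computation that exploits the symmetry of the quadratic expansion of $h$ at its minimum shows $h(x^\star) \geq 0$, and the monotonicity of $h$ then gives $\eta \leq 2L/(L+1) - q$. In Case 1 ($q < L/(L+1)$) this characteristic-root estimate must be combined with a complementary direct bound on the recurrence, obtained by iterating the inequality over one block of $L$ steps and tracking the failure probability of the block; the resulting factor $1 - \alpha$ competes with the root-based estimate and produces the stated minimum.

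The main obstacle will be controlling the second positive root of $h$ cleanly enough to produce the closed-form expression $2L/(L+1) - q$ in both subcases, and then reconciling this characteristic-polynomial bound with the combinatorial estimate yielding $1 - \alpha$ in Case 1. The decisive simplification is the telescoping identity isolating $h(x) = x^{L+1} - x^L + \alpha$; once this is in place, the analysis collapses to studying a single one-parameter polynomial with a unique interior minimum at $L/(L+1)$, and the three-case structure of the proposition emerges naturally from the geometry of $h$.
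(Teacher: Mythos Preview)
The paper gives no self-contained proof here; it simply defers to \cite{lee2024robust}. Your characteristic-polynomial route via the telescoping identity $(x-q)\,g(x)=x^{L+1}-x^L+\alpha$ is a clean and correct reduction, and together with the Perron--Frobenius observation it does locate $\eta$ as the positive root of $h(x)=x^{L+1}-x^L+\alpha$ other than $q$.

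One technical point needs repair. You assert $h(x^\star)\ge 0$ uniformly for Cases~1 and~2, but the sign actually flips between them. With $m=L/(L+1)$ and $x^\star=2m-q$, differentiating in $q$ gives
\[
\frac{d}{dq}\,h(2m-q)\;=\;\bigl(L-(L+1)q\bigr)\bigl[q^{\,L-1}-(2m-q)^{L-1}\bigr],
\]
and both factors change sign precisely at $q=m$, so the product is nonpositive on each side and $q\mapsto h(2m-q)$ is monotone nonincreasing with value $0$ at $q=m$. Hence $h(x^\star)>0$ in Case~1 and $h(x^\star)<0$ in Case~2. The desired conclusion $\eta\le x^\star$ still follows in both cases, because the relevant root lies on the increasing branch of $h$ in Case~1 (so $h(x^\star)>0$ with $x^\star>m$ forces $r<x^\star$) and on the decreasing branch in Case~2 (so $h(x^\star)<0$ with $x^\star<m$ again forces $r<x^\star$). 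State the two signs separately rather than invoking a single ``symmetry'' inequality.

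A minor simplification for Case~1: the $1-\alpha$ bound does not require a separate block-iteration argument. Directly,
\[
h(1-\alpha)=(1-\alpha)^L\bigl((1-\alpha)-1\bigr)+\alpha=\alpha\bigl(1-(1-\alpha)^L\bigr)>0,
\]
and since $\alpha=pq^L\le L^L/(L+1)^{L+1}<1/(L+1)$ one has $1-\alpha>L/(L+1)$, so the same monotonicity of $h$ on the increasing branch gives $r<1-\alpha$. Both terms in the minimum thus arise from evaluating $h$ at two convenient test points.
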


The proof follows the derivation in~\cite{lee2024robust}.

In all cases, the convergence rate satisfies \(\eta < 1\), implying that
\begin{align}
    \lim_{N \to \infty} P_{\textnormal{tar}}(N) = 1.
\end{align}

\section{Impact of Different Types of Decoherence}
\label{sec:decoherence_cases}

In quantum mechanics, different types of decoherence may be particularly relevant depending on the application.  
This section presents the minimum success probability of obtaining the nominal state, as stated in Theorem~\ref{thm: general min suc prob}, under various decoherence models, including a tighter bound for two-level quantum systems.

\subsection{Analysis of Depolarizing Decoherence}
\label{sec: Depolarizing Decoherence}

In an $N$-dimensional quantum system, a depolarizing decoherence channel is defined as \cite{nielsen2010quantum}:
\begin{align}
\label{eqn: depolarizing_decoherence_channel}
    \mathcal{E}(\rho) = p_D \frac{\mathbb{I}}{d} + (1-p_D)\rho,
\end{align}
where $p_D$ is the depolarizing probability, and $d$ is the dimension of the system. Representing the density matrix in terms of the generalized Bloch vector \cite{yang2013exploring}, we have:
\begin{align}
    \rho = \frac{1}{d} + \frac{1}{2} \sum_i x_i X_i,
\end{align}
where $x_i$ are the Bloch vector components, and $X_i$ are the corresponding generators.

Now, we calculate the probability:

\begin{align}
\begin{aligned}
       &\text{Tr}(\rho_{T_s|t} \rho_{t+T_s}) = \text{Tr}\left(\left(\frac{1}{d} + \frac{1}{2} \sum_i x_i(T_s|t) X_i\right) \right. \\
       &\qquad \left. \left(\frac{1}{d} + e^{-\gamma T_s} \frac{1}{2} \sum_i x_i(T_s|t) X_i\right)\right) \\
    &= \text{Tr}\left(\frac{\mathbb{I}}{d^2} + e^{-\gamma T_s} \frac{1}{4} \sum_{i,j} x_i(T_s|t) x_j(T_s|t) X_i X_j \right) \\
    &= \frac{1}{d} + \left(1 - \frac{1}{d}\right) e^{-\gamma T_s}.
\end{aligned}
\end{align}
Here, we assume that $p_D = 1 - e^{-\gamma T_s}$, and the last term is derived from the definition of the matrix $X$ for transferring the density matrix in the Bloch vector representation \cite{yang2013exploring, Kimura2003The}.

In depolarizing decoherence case, the system Hamiltonian and depolarizing decoherence channel can be swapped. Following the subpotimal Lyapunov decrease as \eqref{eqn: suboptimal difference}, we have

\begin{align}
\begin{aligned}
      &J(\rho_{T_s|t}) - J(\rho(t)) \\
      &\leq -\lambda_0T_s + \mathcal{C}(\rho_{\text{tar}},\tilde \rho_f)  - \mathcal{C}(\rho_{\text{tar}},\rho_f)\\
      &\leq (\mathcal{D}(\rho_{\text{tar}},\tilde \rho_f) +\mathcal{D}(\rho_{\text{tar}},\rho_f))\mathcal{D}(\tilde \rho_f,\rho_f)  -\lambda_0T_s \\
      &\leq 2\mathcal{D}(\tilde \rho_f,\rho_f)  -\lambda_0T_s \\
      &\leq  2\mathcal{D}(\mathcal{E}_T(\rho_{T_s| t}),(\mathcal{E}_T(\rho_{t+T_s})) -\lambda_0T_s \\
      &\leq  2\mathcal{D}( \rho_{T_s| t} , \rho_{t+T_s} )  -\lambda_0T_s \\
      &= \sqrt{2p_D(1-\frac{1}{d})}  -\lambda_0T_s.
\end{aligned}
\end{align}
Since \( \rho_{T_s| t} \) must commute with \( \rho_{t+T_s} \) due to the depolarizing evolution, we can directly calculate 
\begin{align}
    \mathcal{D}( \rho , \mathcal{E}(\rho) ) = \dfrac{1}{2} \sqrt{ 2 p_D \left( 1 - \dfrac{1}{d} \right) }
\end{align}
for any pure state \(\rho\) using the depolarizing decoherence channel \eqref{eqn: depolarizing_decoherence_channel} \cite[Chapter 9]{nielsen2010quantum}. Therefore, when
\begin{align}
\label{eqn:decoherence_Lyapunov_decreaing}
    \sqrt{ 2 p_D \left( 1 - \dfrac{1}{d} \right) } - \lambda_0 T_s \leq 0,
\end{align}
the cost function serves as a Lyapunov function and guarantees stability for the system under depolarizing decoherence.

Similarly, to achieve stability in expectation, the following condition should be satisfied:
\begin{align}
\label{eqn:decoherence_Lyapunov_decreaing_expection}
    -\lambda_0 T_s + \sqrt{ 2 p_D \left( 1 - \dfrac{1}{d} \right) } + \left( 1 - \dfrac{1}{d} \right)p_D \leq 0.
\end{align}

\subsection{Uniform Dissipation: \(L^\dagger L = \gamma(t) \mathbb{I}\)}
\label{sec:Uniform Dissipation}

In the uniform dissipative case, the Lindblad operator satisfies  
\begin{align}
    L^\dagger L = \gamma(t) \mathbb{I}.
\end{align}
This condition appears in certain decoherence models, such as phase-damping decoherence. Under this assumption, a refined lower bound on the minimum probability of obtaining the nominal state can be derived by applying the stochastic unraveling of the quantum system \cite{yip2018quantum}.
\begin{prop}
    Given that the Lindblad operator satisfies \(L^\dagger L = \gamma(t) \mathbb{I}\) and \(0\leq \gamma (t) \leq \bar \gamma\), the trace of the product of the evolved state \(\rho_{t+T_s}\) and the predicted state \(\rho_{T_s|t}\) satisfies the relation:
    \begin{align}
        \text{Tr} (\rho_{t+T_s} \rho_{T_s|t}) \geq e^{-\bar\gamma T_s}.
    \end{align}
\end{prop}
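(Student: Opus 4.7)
The plan is to use the stochastic unraveling of the Lindblad equation already invoked in the proof of Theorem~\ref{thm: general min suc prob}, and to exploit the structural simplification that occurs when $L^\dagger L = \gamma(t)\mathbb{I}$. In that special case, the effective non-Hermitian Hamiltonian becomes
\begin{align*}
    \tilde{H}_{\mathrm{eff}}(s) = H(s) - \tfrac{i}{2}\gamma(s)\mathbb{I},
\end{align*}
so the non-unitary contribution is merely a scalar, and $V_{\mathrm{eff}}$ factors as $e^{-\tfrac{1}{2}\int_t^{t+T_s}\gamma(s)\,ds}\,U(t+T_s,t)$, where $U$ is the unitary generated by $H$ alone.

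Next I would substitute this factorization into the deterministic part of the unraveling to obtain $\tilde{\rho}_d = e^{-\int_t^{t+T_s}\gamma(s)\,ds}\,U\rho_t U^\dagger$. Since $U\rho_t U^\dagger$ is by definition the nominal state $\rho_{T_s|t}$, this gives the clean identity $\rho_d = \rho_{T_s|t}$ after renormalization, so the deterministic trajectory coincides exactly with the nominal one up to a global decay factor. Inserting this into the trajectory decomposition~\eqref{eqn: state gotten from the quantum trajectory} yields
\begin{align*}
    \rho_{t+T_s} = e^{-\int_t^{t+T_s}\gamma(s)\,ds}\,\rho_{T_s|t} + \sum_i p_i\,\rho_{\mathrm{jump},i}.
\end{align*}

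Taking the trace against $\rho_{T_s|t}$ and using that $\rho_{T_s|t}$ is pure so $\mathrm{Tr}(\rho_{T_s|t}^2)=1$, I get
\begin{align*}
    \mathrm{Tr}(\rho_{t+T_s}\rho_{T_s|t}) = e^{-\int_t^{t+T_s}\gamma(s)\,ds} + \sum_i p_i\,\mathrm{Tr}(\rho_{\mathrm{jump},i}\rho_{T_s|t}).
\end{align*}
Each jump-state term is nonnegative because $\rho_{\mathrm{jump},i}\succeq 0$ and $\rho_{T_s|t}\succeq 0$, so the sum can be discarded to produce a lower bound. Finally, the monotonicity of the exponential together with $\gamma(s)\leq\bar{\gamma}$ gives $e^{-\int_t^{t+T_s}\gamma(s)\,ds}\geq e^{-\bar{\gamma}T_s}$, which is the claim.

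The only step that requires any care is justifying that $V_{\mathrm{eff}}$ factors cleanly into a scalar times the nominal unitary. This uses that the scalar decay commutes with $H(s)$ at every time (since it is proportional to $\mathbb{I}$), so the time-ordered exponential splits without any Dyson-series corrections; given this commutativity, the rest of the argument reduces to positivity of jump contributions and monotonicity of the exponential, both of which are immediate.
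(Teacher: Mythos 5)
Your proposal is correct and follows essentially the same route as the paper: both use the stochastic unraveling, observe that the scalar decay term $-\tfrac{i}{2}\gamma(s)\mathbb{I}$ commutes with $H(s)$ so the deterministic trajectory coincides with the nominal unitary evolution up to the factor $e^{-\int_t^{t+T_s}\gamma(s)\,ds}$, discard the (nonnegative) jump contributions, and bound the exponential using $\gamma(s)\leq\bar{\gamma}$. Your phrasing via positivity of the jump terms is a slightly cleaner justification than the paper's ``worst-case orthogonality'' assumption, but the argument is the same.
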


\begin{proof}
    To analyze this case, we examine the Lindblad equation using stochastic unraveling \cite{yip2018quantum}. Since the Lindblad operator involves the identity matrix, it commutes with the Hamiltonian. Consequently, following \eqref{eqn: deterministic evolution} in Appendix \ref{sec: Appendix B}, the deterministic evolution can be expressed as:
    \begin{align}
        \tilde{\rho}_d = V_{\text{eff}} \rho_t V^\dagger_{\text{eff}},
    \end{align}
    where 
    \[
    V_{\text{eff}} = \exp \left( -i \int_{t}^{t+T_s} H(s) - \frac{1}{2} \gamma(s) \mathbb{I} \, ds \right).
    \]
    
    Following the proof of Theorem \ref{thm: general min suc prob}, we consider the worst-case scenario where states following jumping trajectories yield a probability \(\text{Tr} (\rho_{T_s|t} \rho_{\text{jump}}) = 0\), i.e., the measured state \(\rho_{T_s|t}\) is orthogonal to the jump state \(\rho_{\text{jump}}\). Therefore, from \eqref{eqn: state gotten from the quantum trajectory}, the probability of obtaining the correct state satisfies the lower bound:
    \begin{align}
    \label{eqn: 28}
        \text{Tr} (\rho_{t + T_s} \rho_{T_s|t}) \geq e^{-\int_{t}^{t+T_s} \gamma(s) \, ds} \text{Tr} (\rho_d \rho_{T_s|t}),
    \end{align}
    where \(\text{Tr} (\rho_d \rho_{T_s|t}) = 1\) due to \(L^\dagger L\) being proportional to the identity.

    In this worst-case scenario, we obtain:
    \begin{align}
        \text{Tr} (\rho_{t + T_s} \rho_{T_s|t}) \geq e^{-\bar\gamma (T_s)},
    \end{align}
    since \(0 \leq\gamma(t) \leq \bar\gamma\) over the time interval \([t, t + T_s]\).
\end{proof}

In this special case, we can reformulate the stability condition as follows:
\begin{align}
    2\sqrt{1-e^{-\bar\gamma T_s}} - \lambda_0 T_s \leq 0,
\end{align}
as shown in Proposition \ref{prop: Lyapunov stable}. Additionally, the stability condition in expectation can be expressed as:
\begin{align}
    2\sqrt{1-e^{-\bar\gamma T_s}} + 1-e^{-\bar\gamma T_s} - \lambda_0 T_s \leq 0,
\end{align}
as demonstrated in Theorem \ref{thm: POVM Lyapunov stable}.

\subsection{Lower Bound in Two-Level Quantum Systems}

A two-level quantum system serves as a special case that allows for a tighter bound on the minimum probability of obtaining the nominal state. This section introduces several key cases, including the closed two-level system, as well as three common decoherence models: depolarizing decoherence, phase-damping decoherence, and amplitude-damping decoherence.

To obtain a tighter bound for the closed two-level system than the one provided by Theorem \ref{thm: general min suc prob}, we follow \cite[Theorem 2]{lee2024robust}. By generalizing this result based on the principle of mathematical induction, we derive Lemma \ref{lemma: k step}, which extends the analysis to measurements over \(l\)-step intervals and incorporates time-varying uncertainties in two-level systems.

\begin{lem}
\label{lemma: k step}
Consider the nominal system \eqref{eqn:nominal_evolution} in a two-level quantum system. If the sampling time \( T_s \) satisfies the condition \( l \bar{\Delta} T_s \leq \dfrac{\pi}{2} \), where \( \bar{\Delta} \) is the bound on the uncertainties such that \( \| H_\Delta \| = \Delta \| H'_\Delta \| \) with \( \| H'_\Delta \| = 1 \) and \( |\vec{\Delta}| \leq \bar{\Delta} \), then the probability of transitioning to the correct nominal state \( \rho_{l|k} \) satisfies the lower bound:
\begin{align}
  \operatorname{Tr} (\rho_{l|k} \rho_{k+l}) \geq \cos^2( l \bar{\Delta} T_s ).
\end{align}    
Here, \( \rho_{l|k} \) is the nominal state evolved from \( \rho_k \), while \( \rho_{k+l} \) is the state evolved under uncertainty. The transition from \( k \) to \( k+l \) assumes piecewise constant control and uncertainty over each step of length \( l \).
\end{lem}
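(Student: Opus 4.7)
The plan is to proceed by induction on $l$, using the Fubini--Study angle between pure states as a metric and invoking unitary invariance to propagate a per-step estimate. Define $\theta(|\phi\rangle, |\chi\rangle) := \arccos|\langle\phi|\chi\rangle|$, which is a bona fide metric on pure states (it satisfies the triangle inequality) and obeys $\theta(U|\phi\rangle, U|\chi\rangle) = \theta(|\phi\rangle, |\chi\rangle)$ for any unitary $U$. In this language, the target inequality $\operatorname{Tr}(\rho_{l|k}\rho_{k+l}) \geq \cos^2(l\bar{\Delta}T_s)$ is equivalent to $\theta(|\psi_{l|k}\rangle, |\psi_{k+l}\rangle) \leq l\bar{\Delta}T_s$, provided the right-hand side stays in $[0, \pi/2]$, which is exactly the hypothesis $l\bar{\Delta}T_s \leq \pi/2$.

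The base case $l = 1$ is a restatement of Theorem~2 of \cite{lee2024robust}: in a two-level system with piecewise-constant control and uncertainty over a single step of length $T_s$, one has $\theta(|\psi_{1|k}\rangle, |\psi_{k+1}\rangle) \leq \bar{\Delta}T_s$. For the inductive step, assume $\theta(|\psi_{l-1|k}\rangle, |\psi_{k+l-1}\rangle) \leq (l-1)\bar{\Delta}T_s$. Let $U_{0,l}$ and $U_l$ denote, respectively, the nominal and perturbed unitaries generated over the $l$-th step. Introduce the auxiliary state $|\tilde{\psi}\rangle := U_{0,l}|\psi_{k+l-1}\rangle$. By the triangle inequality, $\theta(|\psi_{l|k}\rangle, |\psi_{k+l}\rangle) \leq \theta(|\psi_{l|k}\rangle, |\tilde{\psi}\rangle) + \theta(|\tilde{\psi}\rangle, |\psi_{k+l}\rangle)$. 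The first term equals $\theta(|\psi_{l-1|k}\rangle, |\psi_{k+l-1}\rangle)$ by unitary invariance applied to $U_{0,l}$, hence is bounded by $(l-1)\bar{\Delta}T_s$ via the inductive hypothesis. The second term compares the nominal and perturbed evolutions of the \emph{same} state $|\psi_{k+l-1}\rangle$ over a single step, so the base case yields $\theta(|\tilde{\psi}\rangle, |\psi_{k+l}\rangle) \leq \bar{\Delta}T_s$. Adding and taking cosines (valid because the sum remains in $[0, \pi/2]$) gives the claim.

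The main obstacle is ensuring the base case is truly applicable at each step. In particular, one must verify that the per-step bound of Theorem~2 in \cite{lee2024robust} holds uniformly in the initial state, so that it can be invoked at the perturbed intermediate state $|\psi_{k+l-1}\rangle$ and not only at $|\psi_k\rangle$. This is where the two-level assumption enters essentially: on the Bloch sphere, the unitaries $e^{-iHT_s}$ and $e^{-i(H + H_\Delta)T_s}$ are rotations, and a direct Rodrigues-formula computation shows that the maximal Hilbert-space angle between $U_{0,l}|\phi\rangle$ and $U_l|\phi\rangle$ is bounded by $\bar{\Delta}T_s$ independently of $|\phi\rangle$, as long as $\bar{\Delta}T_s \leq \pi/2$. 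The constraint $l\bar{\Delta}T_s \leq \pi/2$ in the lemma is precisely what is needed to keep the accumulated angle within the monotone branch of $\cos^2$, so that every inequality used is tight enough to close the induction.
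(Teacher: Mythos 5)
Your proof is correct, and it reaches the conclusion by a genuinely different route from the paper. The paper's induction inserts a resolution of the identity $|\psi_{j|k}\rangle\langle\psi_{j|k}| + |\psi_{j|k}^{\perp}\rangle\langle\psi_{j|k}^{\perp}|$ at each intermediate time, parameterizes the resulting overlaps as $\cos\phi_j e^{i\omega_j}$ and $\sin\phi_j e^{i\xi_j}$ using the two-level completeness relation, and then bounds $\bigl|\cos\phi_{l+1}\cos(\sum_i\phi_i)e^{i\sum\omega_i}+\sin\phi_{l+1}\sin(\sum_i\phi_i)e^{i\sum\xi_i}\bigr|\geq\cos(\sum_i\phi_i+\phi_{l+1})$; you instead treat $\theta=\arccos|\langle\cdot|\cdot\rangle|$ as a metric, introduce the hybrid state $U_{0,l}|\psi_{k+l-1}\rangle$, and close the induction with unitary invariance plus the triangle inequality. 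The two arguments are mechanically equivalent---the paper's trigonometric estimate is exactly the two-dimensional instance of the Fubini--Study triangle inequality---but your packaging is cleaner: it avoids all phase bookkeeping, and it makes transparent that the inductive step itself is dimension-independent, so the two-level hypothesis is only consumed by the per-step base case. On that last point your diagnosis is slightly conservative: the uniform one-step bound also follows in any dimension from $|\langle\phi|U^{\dagger}V|\phi\rangle|\geq 1-\tfrac{1}{2}\|U-V\|^{2}\geq 1-\tfrac{1}{2}(\bar{\Delta}T_s)^{2}\geq\cos(\bar{\Delta}T_s)$, so the Bloch-sphere/Rodrigues computation, while valid, is not where the two-level assumption is essential; it is essential only to the paper's orthogonal-decomposition formulation. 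One presentational caveat: you should state explicitly that the per-step bound is applied at the perturbed intermediate state with whatever (piecewise-constant) control and uncertainty act on step $l$, which you do address, so no gap remains.
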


The proof is provided in Appendix \ref{sec:proof_Lemma}.

Using Lemma \ref{lemma: k step}, Proposition \ref{prop: 2 level} can be established by dividing each step into infinitesimal intervals and applying Lemma \ref{lemma: k step} to account for time-varying Hamiltonians.

\begin{prop}
\label{prop: 2 level}
Consider the nominal system~\eqref{eqn:nominal_evolution} in a two-level quantum system.  
If the time interval between measurements \(T_s\) satisfies \(\bar{\Delta} T_s \leq \pi/2\), where \(\bar{\Delta}\) is the bound on the Hamiltonian uncertainty, \(\Delta \leq \bar{\Delta}\), then
\begin{align}
    \text{Tr}(\rho_{T_s|t} \rho_{t+T_s}) \geq \cos^2(\bar{\Delta} T_s).
\end{align}
\end{prop}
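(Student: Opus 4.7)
The plan is to reduce Proposition~\ref{prop: 2 level} to Lemma~\ref{lemma: k step} by a refinement argument on the time axis. Lemma~\ref{lemma: k step} already handles an $l$-step piecewise-constant evolution under the hypothesis $l\bar{\Delta}T_s \leq \pi/2$, yielding the bound $\cos^2(l\bar{\Delta}T_s)$. The key observation is that the condition in Proposition~\ref{prop: 2 level}, namely $\bar{\Delta}T_s \leq \pi/2$, is precisely what one gets from the Lemma's hypothesis after rescaling: if we partition $[t,t+T_s]$ into $l$ equal subintervals of length $\delta = T_s/l$, then $l\bar{\Delta}\delta = \bar{\Delta}T_s \leq \pi/2$, and the Lemma then gives the conclusion $\cos^2(l\bar{\Delta}\delta) = \cos^2(\bar{\Delta}T_s)$ independently of $l$.

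First, I would fix an arbitrary $l \in \mathbb{N}$ and partition the interval $[t,t+T_s]$ into $l$ equal subintervals of length $\delta_l = T_s/l$. On each subinterval, I would approximate the (possibly time-varying) nominal Hamiltonian $H(s)$ and the uncertainty $H_\Delta(s)$ by their values at the left endpoint of the subinterval, producing piecewise-constant surrogates $H^{(l)}(s)$ and $H_\Delta^{(l)}(s)$ with $\|H_\Delta^{(l)}(s)\| \leq \bar{\Delta}$. Let $\rho^{(l)}_{T_s|t}$ and $\rho^{(l)}_{t+T_s}$ denote the nominal and perturbed states produced by these piecewise-constant dynamics starting from the common initial state $\rho_t$. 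Applying Lemma~\ref{lemma: k step} with step size $\delta_l$ and $l$ steps yields
\begin{equation*}
  \operatorname{Tr}\bigl(\rho^{(l)}_{T_s|t}\,\rho^{(l)}_{t+T_s}\bigr) \geq \cos^2(l\bar{\Delta}\delta_l) = \cos^2(\bar{\Delta}T_s),
\end{equation*}
and crucially the right-hand side does not depend on $l$.

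Next, I would pass to the limit $l \to \infty$. Standard results on solutions of time-dependent linear ODEs driven by bounded, measurable Hamiltonians ensure that $\rho^{(l)}_{T_s|t} \to \rho_{T_s|t}$ and $\rho^{(l)}_{t+T_s} \to \rho_{t+T_s}$ in operator norm as $l \to \infty$, since the piecewise-constant approximations of $H$ and $H_\Delta$ converge pointwise a.e. and are uniformly bounded, so the associated propagators converge uniformly on $[t,t+T_s]$ by a Gr\"onwall-type estimate. Continuity of the trace form $(A,B) \mapsto \operatorname{Tr}(AB)$ then transfers the uniform lower bound to the limit, giving $\operatorname{Tr}(\rho_{T_s|t}\,\rho_{t+T_s}) \geq \cos^2(\bar{\Delta}T_s)$.

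The main obstacle is the limit step: one must verify that the piecewise-constant approximation does not distort the inner product between the nominal and perturbed trajectories by more than $o(1)$ in $l$, so that the uniform bound from Lemma~\ref{lemma: k step} survives. For bounded Hamiltonians in a two-level system this is routine, and the hypothesis $\bar{\Delta}T_s \leq \pi/2$ guarantees that the cosine stays in its monotone regime, ensuring the bound is nontrivial and that the Lemma's hypothesis is satisfied for every $l$ without additional refinement.
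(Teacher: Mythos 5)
Your proposal is correct and follows essentially the same route as the paper: the paper's (one-sentence) proof of Proposition~\ref{prop: 2 level} is precisely to subdivide $[t,t+T_s]$ into small intervals, apply Lemma~\ref{lemma: k step} with step size $T_s/l$ so that the bound $\cos^2(l\bar{\Delta}\,T_s/l)=\cos^2(\bar{\Delta}T_s)$ is independent of $l$, and pass to the limit to handle time-varying Hamiltonians. Your write-up merely makes explicit the convergence of the piecewise-constant propagators and the continuity of the trace form, which the paper leaves implicit.
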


When the system is subject to environmental interactions described by~\eqref{eqn:Lindblad_equation}, we derive lower bounds on the probability of obtaining the nominal state under the combined influence of Hamiltonian uncertainty and decoherence.  
These results are summarized in Table~\ref{Tab: summarized table}, which lists the lower bounds of \(\text{Tr}(\rho_{T_s|t} \rho_{t+T_s})\) for various types of decoherence.  
For an uncertain Hamiltonian \(H_\Delta\), the bound follows from Proposition~\ref{prop: 2 level} and holds under the condition \(\bar{\Delta} T_s \leq \pi/2\).  
However, in the amplitude-damping case, a stricter condition \(\bar{\Delta} T_s \leq \pi/4\) is required.  
All cases assume that the dissipation rate is bounded by \(\gamma < \bar{\gamma}\).  
Detailed derivations are provided in Appendix~\ref{sec: MSP 2 level}.

\begin{table}[htbp]
\centering
\caption{Lower bounds on the probability of obtaining the nominal state in two-level systems.}
\resizebox{\linewidth}{!}{
\begin{tabular}{|l|c|}
\hline
Case & Lower bound of \( \text{Tr}(\rho_{T_s|t} \rho_{t+T_s}) \) \\ \hline
Closed system & \( \cos^2(l\bar\Delta T_s) \)  \\ \hline
Depolarizing decoherence &  \( \frac{1}{2} \cos^2(l\bar\Delta T_s) (1 + e^{-4\bar\gamma l T_s}) \)  \\ \hline
Phase-damping decoherence &  \( \cos^2(l\bar\Delta T_s) e^{-\bar\gamma l T_s} \) \\ \hline
Amplitude-damping decoherence &  \( \cos^2(l\bar\Delta) (1 - \bar\gamma l T_s) - \frac{1}{2} \sin(2l\bar\Delta) \) \\ \hline
\end{tabular}
}
\label{Tab: summarized table}
\end{table}

The lower bounds in Table~\ref{Tab: summarized table} can be directly applied to verify Lyapunov stability using Proposition~\ref{prop: Lyapunov stable}, and Lyapunov stability in expectation as shown in Theorem~\ref{thm: POVM Lyapunov stable}.

\section{Numerical Examples}
\label{sec:numerical_simulations}

In this section, we present quantum-system control using Q-TOPC, as described in Algorithm \ref{alg: qMPC}. The time-optimal control problem for a two-level system is first solved using PMP, following \cite{Lin2020Time}. We then compute the time-optimal control for a three-level system via gradient descent. The resulting trajectories indicate a monotonic decrease in the Lyapunov function, consistent with Corollaries \ref{cor: Lyapunov stable} and \ref{cor: Lyapunov stable nominal}, and a decrease in expectation as established in Theorem \ref{thm: POVM Lyapunov stable}. Incorporating POVM enhances fidelity compared to the unmeasured case.

\subsection{Two-Level Quantum System}
\label{sec:Simulation of a Two-Level Quantum System}
This section considers a two-level quantum system under time-optimal control derived from \eqref{eqn: OCP}. The system evolves under the free Hamiltonian \( H_0 = \sigma_z \), with a control Hamiltonian \( H_u = u(t) \sigma_x \), where \( u(t) \) is a bounded control input satisfying \(|u(t)| \leq 1\). The Pauli matrices used to represent spin operators in quantum mechanics are given by:
\begin{align*}
    \sigma_x &= \begin{pmatrix} 0 & 1 \\ 1 & 0 \end{pmatrix}, \quad
    \sigma_y = \begin{pmatrix} 0 & -i \\ i & 0 \end{pmatrix}, \quad
    \sigma_z = \begin{pmatrix} 1 & 0 \\ 0 & -1 \end{pmatrix}.
\end{align*}

Additionally, the system includes a dissipative term \( L = \sqrt{\gamma} \sigma_y \), with \( \gamma \in [0, 0.25] \). Q-TOPC is modeled for different Lindblad operators \( L \): \( L = 0 \) (closed system), \( L = 0.1 \sigma_y \), and \( L = 0.5 \sigma_y \). The cost function uses a weight \( \lambda_0 = 0.04 \). The control objective is to steer the initial state \( |0\rangle \) to the target state \( |1\rangle \), the eigenstates of \( \sigma_z \).

To verify Corollaries~\ref{cor: Lyapunov stable} and \ref{cor: Lyapunov stable nominal}, we apply Algorithm~\ref{alg: qMPC} to a system subject to repeated measurements using POVMs. A fixed time step \( T_s = 1 \) is selected, and the quantum system is measured at discrete intervals. The simulation terminates when the optimal time becomes smaller than \( T_s \), at which point the optimal time is taken as the final measurement period. Fig.~\ref{fig:cost_fidelity_once_high_error} shows a decreasing cost function along with the corresponding fidelity.

\begin{figure}[htbp]
    \centering
    \includegraphics[width=0.47\textwidth]{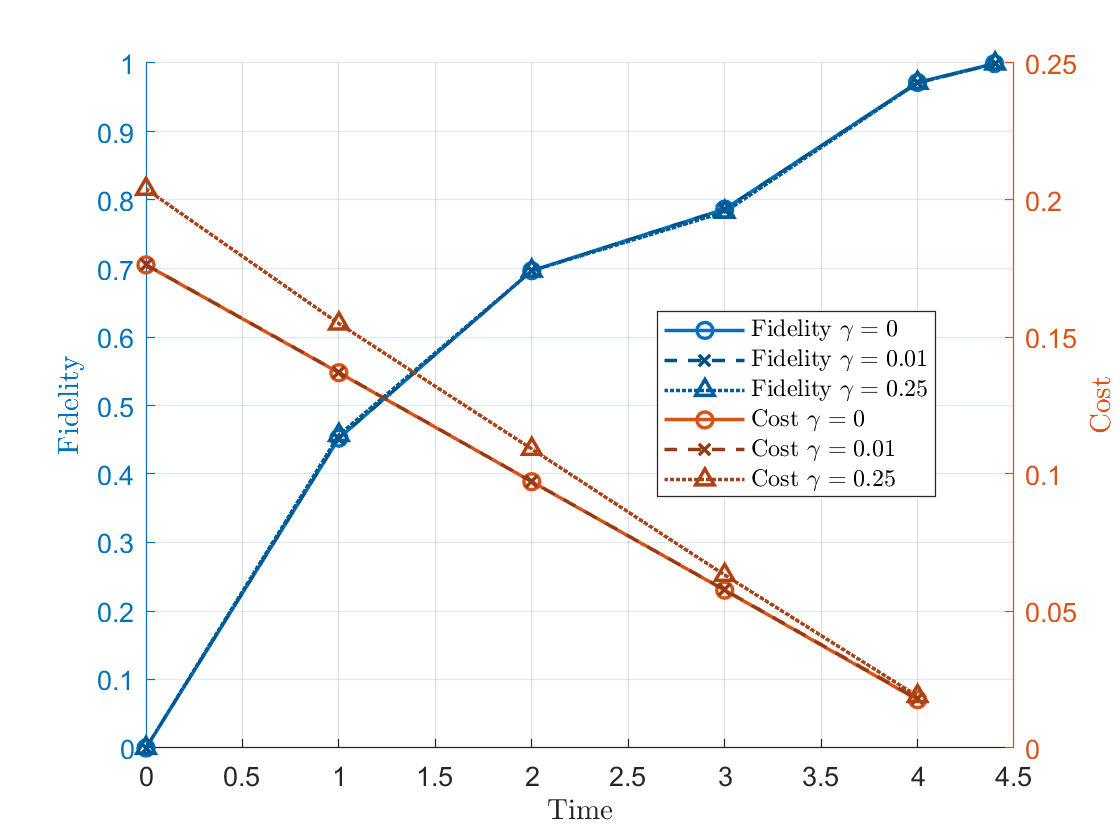}
    \caption{Evolution of the cost function and fidelity over time for a quantum trajectory under time-optimal control following Algorithm~\ref{alg: qMPC}.}
    \label{fig:cost_fidelity_once_high_error}
\end{figure}

To verify the decrease in expectation established in Theorem \ref{thm: POVM Lyapunov stable}, we conduct a Monte Carlo analysis based on $1000$ realizations, each involving $20$ measurement steps. Fig.~\ref{fig:cost_fidelity_avg_high_error} presents the averaged cost and fidelity.

\begin{figure}[htbp]
    \centering
    \includegraphics[width=0.47\textwidth]{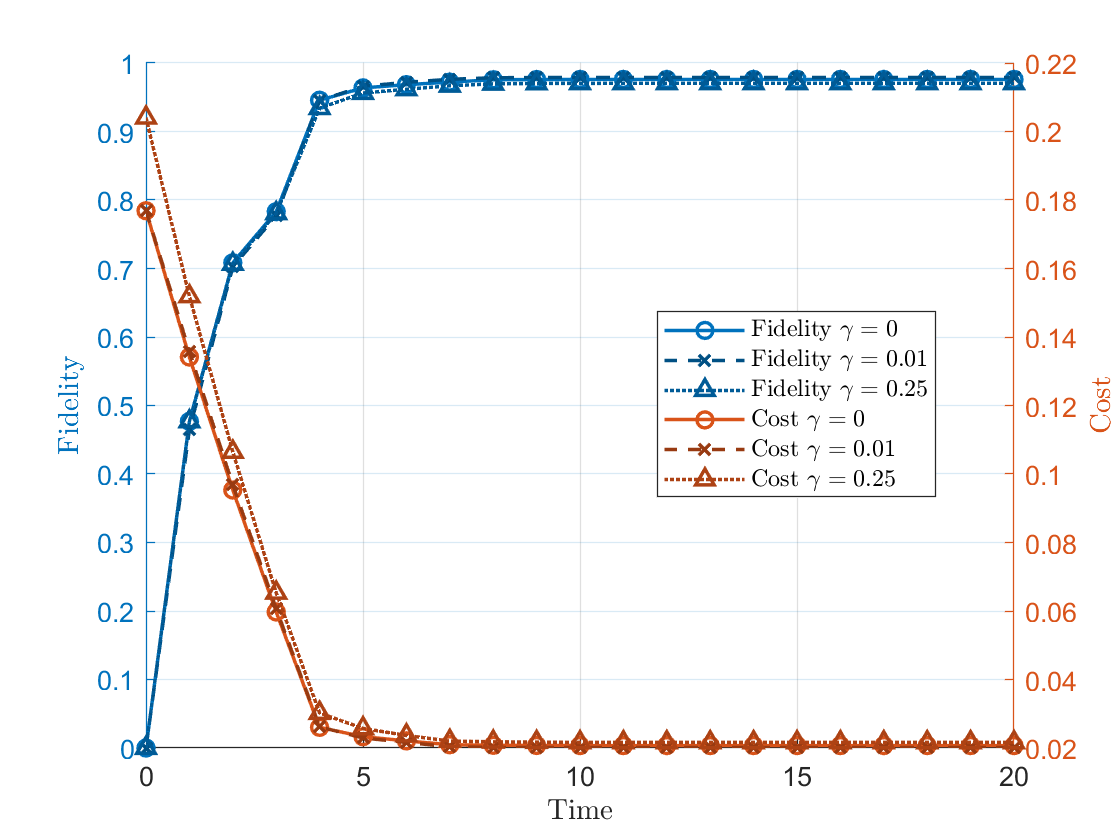}
    \caption{Averaged evolution of the cost function and fidelity for quantum trajectories under POVM control across $1000$ simulations.}
    \label{fig:cost_fidelity_avg_high_error}
\end{figure}

Lastly, we report the infidelity values in Table~\ref{tab:fidelity_comparison_high_error} to evaluate the effectiveness of incorporating measurements. As a baseline, we consider an open-loop strategy, where PMP control is applied directly to the true system without measurement feedback, resulting in an infidelity of \( 965.1 \times 10^{-4} \).

In comparison, Table~\ref{tab:fidelity_comparison_high_error} presents the infidelity values obtained when measurement feedback is incorporated. ``Nominal Meas.'' refers to the case where the measurement outcome is always assumed to match the nominal predicted state, as illustrated in Fig.~\ref{fig:cost_fidelity_once_high_error}. The labels ``Closed System,'' ``\( \gamma = 0.01 \),'' and ``\( \gamma = 0.25 \)'' indicate the nominal models used to compute the optimal control input. ``Avg. Meas.'' represents the average infidelity over $1000$ Monte Carlo simulations, where each run randomly samples a different nominal model, as shown in Fig.~\ref{fig:cost_fidelity_avg_high_error}.

Our simulations show that the simplest nominal model, as stated in Corollary~\ref{cor: Lyapunov stable nominal}, yields sufficiently effective performance. Setting \( \gamma = 0.25 \), the true upper bound, does not yield the best result, as it may lead to a pessimistic estimate and degraded time-optimality. Determining the optimal choice of \( \gamma \) remains an open question.

\begin{table}[htbp]
    \centering
     \caption{Comparison of infidelity values (\( \times 10^{-4} \)) for different measurement strategies in closed and open quantum systems.}
    \resizebox{\linewidth}{!}{
    \begin{tabular}{c|c|c}
        Infidelity (\( \times 10^{-4} \)) & Nominal Meas. & Avg. Meas. \\
        \hline
        Closed System & 8.362 & 240.0 \\
        Open System (\( \gamma = 0.01 \)) & 8.229 & 211.5  \\
        Open System (\( \gamma = 0.25 \))   & 9.174 & 296.5  \\
    \end{tabular}
    }
    \label{tab:fidelity_comparison_high_error}
\end{table}

Additionally, q-TOPC provides a more practical approach to incorporating measurements. By selecting fixed POVMs and determining the optimal time step \( T_s \) rather than using a fixed one, this framework offers a more feasible strategy for measurement preparation. In Fig.~\ref{fig:closed_optimal_fidelity_scatter}, we consider two POVM bases:

\begin{align}
\label{eqn:fixed_POVM}
\begin{aligned}
        \mathcal{M}_1 &= \left\{ |0\rangle \langle 0|, |1\rangle \langle 1| \right\}, \\
    \mathcal{M}_2 &= \left\{ \left(\frac{\sqrt{3}}{2} |0\rangle + \frac{1}{2} |1\rangle\right) \left(\frac{\sqrt{3}}{2} \langle 0| + \frac{1}{2} \langle 1|\right), \right. \\ 
    &\quad \left. \left(\frac{1}{2} |0\rangle - \frac{\sqrt{3}}{2} |1\rangle\right) \left(\frac{1}{2} \langle 0| - \frac{\sqrt{3}}{2} \langle 1|\right) \right\},
\end{aligned}
\end{align}
and q-TOPC follows the model in the absence of dissipation, i.e., $\gamma = 0$.

\begin{figure}[htbp]
    \centering
    \includegraphics[width=0.47\textwidth]{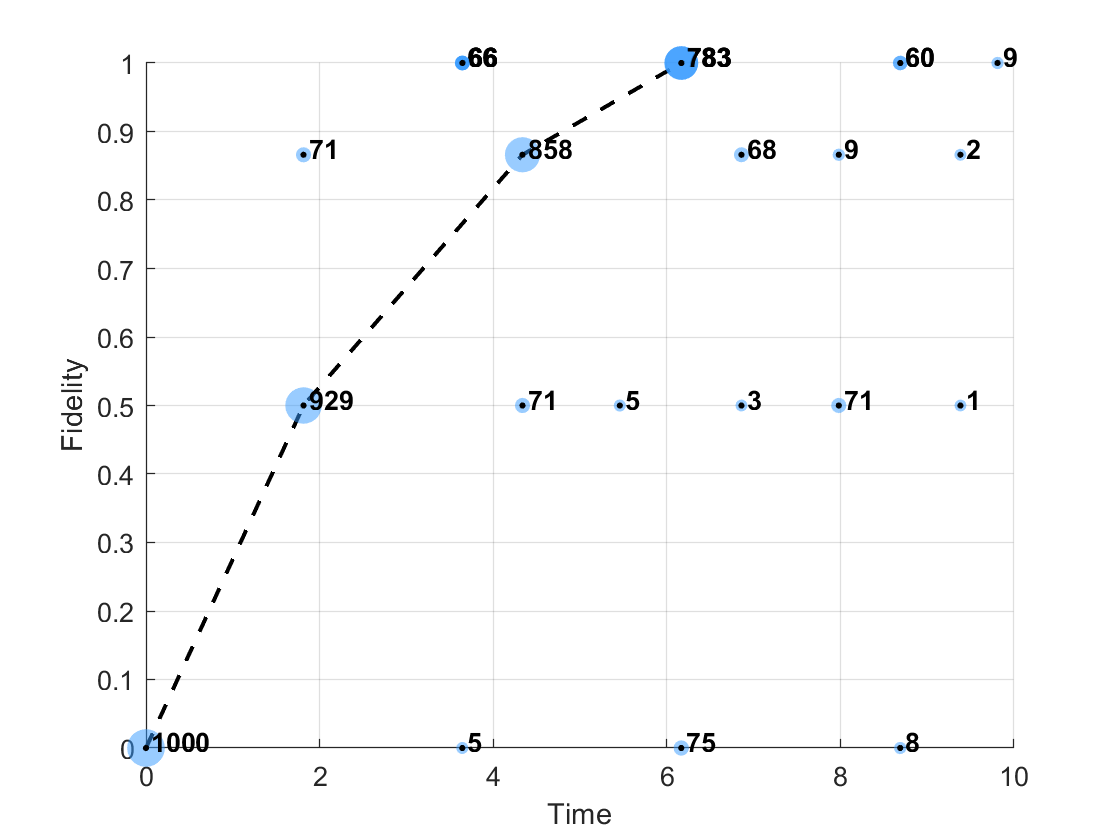}
    \caption{Bubble plot of fidelity evolution under a fixed POVM strategy~\eqref{eqn:fixed_POVM} over $1000$ simulations in a two-level quantum system. The bubble size indicates the frequency of each $(\text{time}, \text{fidelity})$ pair. A clear dominant path emerges, closely following the nominal trajectory, which occurs with much higher probability than alternative measurement outcomes. The number next to each bubble indicates the total count of occurrences for that specific pair.}
    \label{fig:closed_optimal_fidelity_scatter}
\end{figure}

\subsection{Three-Level Quantum Systems}

By extending the analysis to a three-level quantum system, we consider the free Hamiltonian \( H_0 = J_z \), with control Hamiltonian \( H_u = u(t) J_x \), where the control input satisfies \( |u(t)| \leq 1 \). The dissipative process is modeled by the Lindblad operator \( L = 0.1 J_y \). For this system, \( J_x \), \( J_y \), and \( J_z \) denote the angular momentum operators in the \( x \), \( y \), and \( z \) directions, respectively, and are defined as:

\begin{align*}
J_x &= \tfrac{1}{\sqrt{2}}\begin{pmatrix}0&1&0\\1&0&1\\0&1&0\end{pmatrix},\quad
J_y = \tfrac{i}{\sqrt{2}}\begin{pmatrix}0&-1&0\\1&0&-1\\0&1&0\end{pmatrix},\\
&\quad J_z = \operatorname{diag}(1,0,-1).
\end{align*}

We control the state from 
\[
\rho_0 = \mathrm{diag}(1,0,0)
\quad \text{to} \quad
\rho_{\text{tar}} = \mathrm{diag}(0,0,1).
\]

The simulation settings for the three-level quantum system are the same as those described in Section~\ref{sec:Simulation of a Two-Level Quantum System}, except for the system's dimensionality and the corresponding angular momentum matrices.

Figure~\ref{fig:cost_fidelity_once_dim3} illustrates the evolution of the cost function and fidelity for a quantum trajectory where the system follows the nominal trajectory under POVM control, with measurements taken at each step.

\begin{figure}[htbp]
    \centering
    \includegraphics[width=0.47\textwidth]{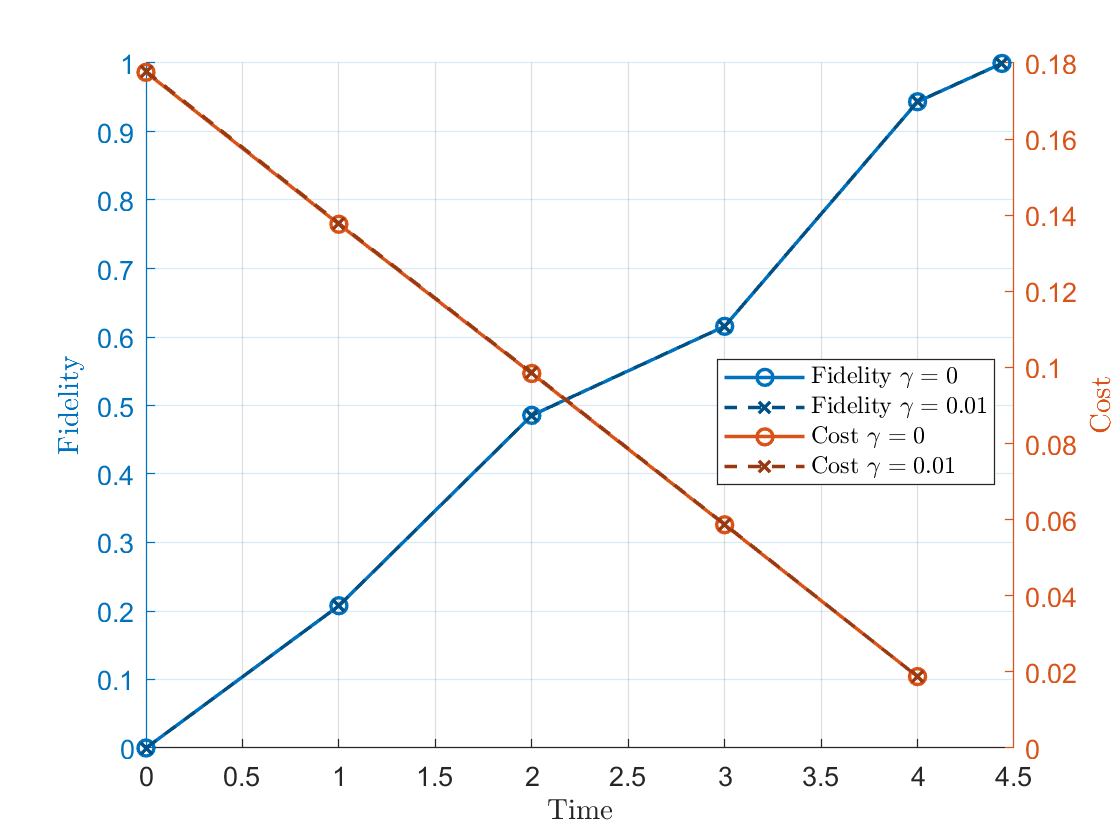}
    \caption{Cost function and fidelity evolution for a quantum trajectory under POVM control, where each step is measured along the nominal trajectory in a three-level system.}
    \label{fig:cost_fidelity_once_dim3}
\end{figure}

To verify the expected behavior, we conduct a Monte Carlo analysis based on $1000$ realizations, each involving $20$ measurement steps. Figure~\ref{fig:cost_fidelity_avg_dim3} presents the averaged evolutions of the cost function and fidelity under POVM control in the three-level system.

\begin{figure}[htbp]
    \centering
    \includegraphics[width=0.47\textwidth]{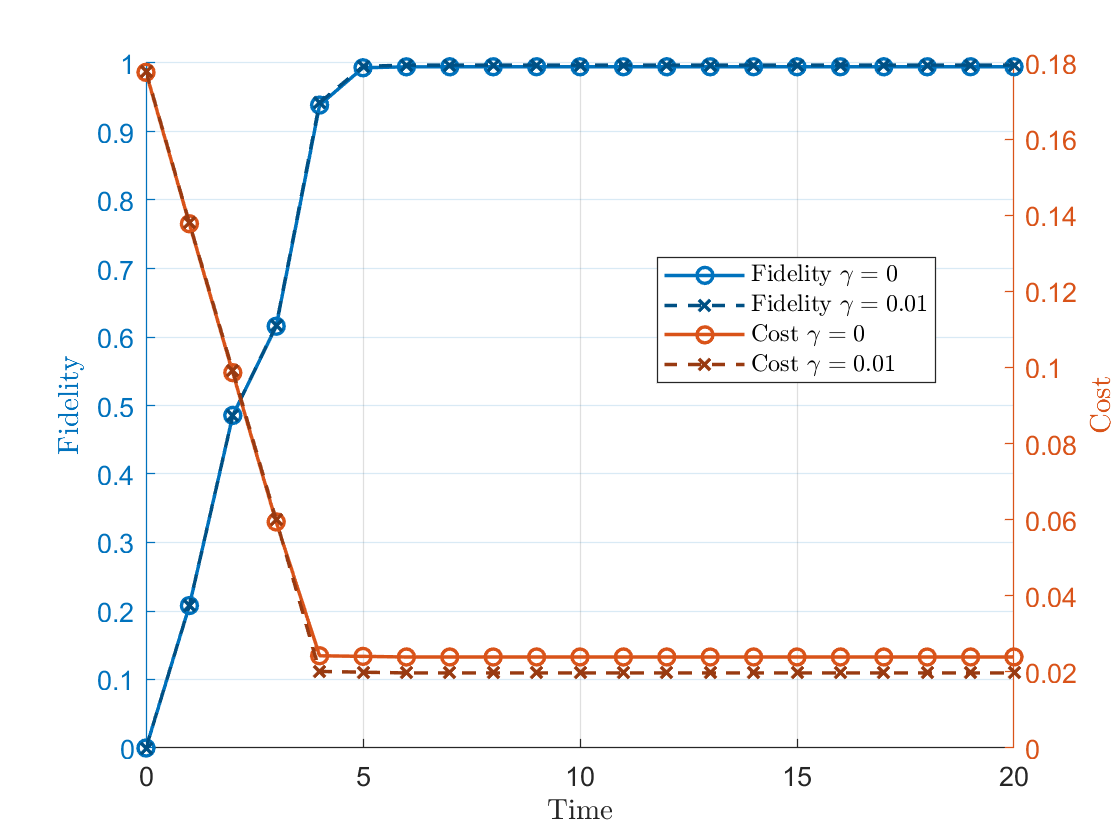}
    \caption{Averaged cost function and fidelity evolution for quantum trajectories under POVM control across $1000$ simulations in a three-level system.}
    \label{fig:cost_fidelity_avg_dim3}
\end{figure}

Table~\ref{tab:fidelity_comparison_dim3} shows the final infidelity results for the three-level system under different measurement strategies. As a baseline, directly applying PMP control to the true system without measurement feedback yields an infidelity of \( 128.4 \times 10^{-4} \).

\begin{table}[htbp]
    \centering
    \caption{Comparison of infidelity values for different measurement strategies in closed and open three-level quantum systems. }
    \begin{tabular}{c|c|c}
        Infidelity (\( \times 10^{-4} \))  & Nominal Meas. & Avg. Meas. \\
        \hline
        Closed System  & 6.315 & 54.91 \\
        Open System    & 6.122 & 30.19 \\
    \end{tabular}  
    \label{tab:fidelity_comparison_dim3}
\end{table}

\section{Conclusions and Future Work}
\label{sec:conclusion}

In this paper, we have investigated several potential conditions for controlling quantum systems using POVMs. We established Lyapunov stability in a general case and provided detailed analyses for specific scenarios, including different types of decoherence. Our results demonstrate that a fixed measurement strategy offers a feasible and systematic approach for controlling quantum states. By employing predetermined input signals and a set of predefined POVMs, the control process is simplified, facilitating the implementation of step-by-step quantum control.

\bibliographystyle{IEEEtran}
\bibliography{autosam}           

\begin{thebibliography}{10}
\providecommand{\url}[1]{#1}
\csname url@samestyle\endcsname
\providecommand{\newblock}{\relax}
\providecommand{\bibinfo}[2]{#2}
\providecommand{\BIBentrySTDinterwordspacing}{\spaceskip=0pt\relax}
\providecommand{\BIBentryALTinterwordstretchfactor}{4}
\providecommand{\BIBentryALTinterwordspacing}{\spaceskip=\fontdimen2\font plus
\BIBentryALTinterwordstretchfactor\fontdimen3\font minus \fontdimen4\font\relax}
\providecommand{\BIBforeignlanguage}[2]{{%
\expandafter\ifx\csname l@#1\endcsname\relax
\typeout{** WARNING: IEEEtran.bst: No hyphenation pattern has been}%
\typeout{** loaded for the language `#1'. Using the pattern for}%
\typeout{** the default language instead.}%
\else
\language=\csname l@#1\endcsname
\fi
#2}}
\providecommand{\BIBdecl}{\relax}
\BIBdecl

\bibitem{nielsen2010quantum}
M.~A. Nielsen and I.~L. Chuang, \emph{Quantum Computation and Quantum Information}.\hskip 1em plus 0.5em minus 0.4em\relax Cambridge University Press, 2010.

\bibitem{gisin2007quantum}
N.~Gisin and R.~Thew, ``Quantum communication,'' \emph{Nature photonics}, vol.~1, no.~3, pp. 165--171, 2007.

\bibitem{cozzolino2019high}
D.~Cozzolino, B.~Da~Lio, D.~Bacco, and L.~K. Oxenl{\o}we, ``High-dimensional quantum communication: benefits, progress, and future challenges,'' \emph{Advanced Quantum Technologies}, vol.~2, no.~12, p. 1900038, 2019.

\bibitem{steane1998quantum}
A.~Steane, ``Quantum computing,'' \emph{Reports on Progress in Physics}, vol.~61, no.~2, p. 117, 1998.

\bibitem{pirandola2020advances}
S.~Pirandola, U.~L. Andersen, L.~Banchi, M.~Berta, D.~Bunandar, R.~Colbeck, D.~Englund, T.~Gehring, C.~Lupo, C.~Ottaviani \emph{et~al.}, ``Advances in quantum cryptography,'' \emph{Advances in optics and photonics}, vol.~12, no.~4, pp. 1012--1236, 2020.

\bibitem{djordjevic2022quantum}
I.~B. Djordjevic, \emph{Quantum Communication, Quantum Networks, and Quantum Sensing}.\hskip 1em plus 0.5em minus 0.4em\relax Academic Press, 2022.

\bibitem{preskill2018quantum}
J.~Preskill, ``Quantum computing in the {NISQ} era and beyond,'' \emph{Quantum}, vol.~2, p.~79, 2018.

\bibitem{wiseman2010quantum}
H.~M. Wiseman and G.~J. Milburn, \emph{Quantum measurement and control}.\hskip 1em plus 0.5em minus 0.4em\relax Cambridge, England: Cambridge University Press, 2010.

\bibitem{schlosshauer2004decoherence}
M.~Schlosshauer, ``Decoherence, the measurement problem, and interpretations of quantum mechanics,'' \emph{Reviews of Modern physics}, vol.~76, no.~4, pp. 1267--1305, 2004.

\bibitem{schlosshauer2019quantum}
------, ``Quantum decoherence,'' \emph{Physics Reports}, vol. 831, pp. 1--57, 2019.

\bibitem{dong2022quantum}
D.~Dong and I.~R. Petersen, ``Quantum estimation, control and learning: opportunities and challenges,'' \emph{Annual Reviews in Control}, vol.~54, pp. 243--251, 2022.

\bibitem{dong2010quantum}
------, ``Quantum control theory and applications: A survey,'' \emph{IET Control Theory \& Applications}, vol.~4, no.~12, pp. 2651--2671, 2010.

\bibitem{altafini2012modeling}
C.~Altafini and F.~Ticozzi, ``Modeling and control of quantum systems: an introduction,'' \emph{IEEE Transactions on Automatic Control}, vol.~57, no.~8, pp. 1898--1917, 2012.

\bibitem{dalessandro2021introduction}
D.~D'Alessandro, \emph{Introduction to quantum control and dynamics}.\hskip 1em plus 0.5em minus 0.4em\relax Chapman and Hall/CRC, 2021.

\bibitem{James2008H}
M.~R. James, H.~I. Nurdin, and I.~R. Petersen, ``{H}$^{\infty}$ control of linear quantum stochastic systems,'' \emph{IEEE Transactions on Automatic Control}, vol.~53, no.~8, pp. 1787--1803, 2008.

\bibitem{Mabuchi2008Coherent}
H.~Mabuchi, ``Coherent-feedback quantum control with a dynamic compensator,'' \emph{Phys. Rev. A}, vol.~78, p. 032323, Sep 2008.

\bibitem{james2021optimal}
M.~James, ``Optimal quantum control theory,'' \emph{Annual Review of Control, Robotics, and Autonomous Systems}, vol.~4, no.~1, pp. 343--367, 2021.

\bibitem{albertini2014time}
F.~Albertini and D.~D'Alessandro, ``Time-optimal control of a two level quantum system via interaction with an auxiliary system,'' \emph{IEEE Transactions on Automatic Control}, vol.~59, no.~11, pp. 3026--3032, 2014.

\bibitem{grivopoulos2008optimal}
S.~Grivopoulos and B.~Bamich, ``Optimal population transfers in a quantum system for large transfer time,'' \emph{IEEE Transactions on Automatic Control}, vol.~53, no.~4, pp. 980--992, 2008.

\bibitem{Boscain2021Introduction}
U.~Boscain, M.~Sigalotti, and D.~Sugny, ``Introduction to the {P}ontryagin maximum principle for quantum optimal control,'' \emph{PRX Quantum}, vol.~2, no.~3, p. 030203, 2021.

\bibitem{Damme2017Robust}
L.~van Damme, Q.~Ansel, S.~J. Glaser, and D.~Sugny, ``Robust optimal control of two-level quantum systems,'' \emph{Phys. Rev. A}, vol.~95, p. 063403, Jun 2017.

\bibitem{Sugny2008ptimal}
D.~Sugny and C.~Kontz, ``Optimal control of a three-level quantum system by laser fields plus {von} {Neumann} measurements,'' \emph{Phys. Rev. A}, vol.~77, p. 063420, Jun 2008.

\bibitem{Sugny2007Time}
D.~Sugny, C.~Kontz, and H.~R. Jauslin, ``Time-optimal control of a two-level dissipative quantum system,'' \emph{Phys. Rev. A}, vol.~76, p. 023419, Aug 2007.

\bibitem{khaneja2001time}
N.~Khaneja, R.~Brockett, and S.~J. Glaser, ``Time optimal control in spin systems,'' \emph{Physical Review A}, vol.~63, no.~3, p. 032308, 2001.

\bibitem{Garon2013Time}
A.~Garon, S.~J. Glaser, and D.~Sugny, ``Time-optimal control of {SU(2)} quantum operations,'' \emph{Phys. Rev. A}, vol.~88, p. 043422, Oct 2013.

\bibitem{Lin2020Time}
C.~Lin, D.~Sels, and Y.~Wang, ``Time-optimal control of a dissipative qubit,'' \emph{Phys. Rev. A}, vol. 101, p. 022320, Feb 2020.

\bibitem{deffner2017quantum}
S.~Deffner and S.~Campbell, ``Quantum speed limits: from {H}eisenberg’s uncertainty principle to optimal quantum control,'' \emph{Journal of Physics A: Mathematical and Theoretical}, vol.~50, no.~45, p. 453001, 2017.

\bibitem{caneva2009optimal}
T.~Caneva, M.~Murphy, T.~Calarco, R.~Fazio, S.~Montangero, V.~Giovannetti, and G.~E. Santoro, ``Optimal control at the quantum speed limit,'' \emph{Physical review letters}, vol. 103, no.~24, p. 240501, 2009.

\bibitem{gajdacz2015time}
M.~Gajdacz, K.~K. Das, J.~Arlt, J.~F. Sherson, and T.~Opatrn{\`y}, ``Time-limited optimal dynamics beyond the quantum speed limit,'' \emph{Physical Review A}, vol.~92, no.~6, p. 062106, 2015.

\bibitem{chen2015near}
Q.-M. Chen, R.-B. Wu, T.-M. Zhang, and H.~Rabitz, ``Near-time-optimal control for quantum systems,'' \emph{Physical Review A}, vol.~92, no.~6, p. 063415, 2015.

\bibitem{grune2017nonlinear}
L.~Gr{\"u}ne, J.~Pannek, L.~Gr{\"u}ne, and J.~Pannek, \emph{Nonlinear model predictive control}.\hskip 1em plus 0.5em minus 0.4em\relax Springer, 2017.

\bibitem{rawlings2017model}
J.~B. Rawlings, D.~Q. Mayne, M.~Diehl \emph{et~al.}, \emph{Model predictive control: theory, computation, and design}.\hskip 1em plus 0.5em minus 0.4em\relax Nob Hill Publishing Madison, WI, 2017, vol.~2.

\bibitem{Berberich2021Data}
J.~Berberich, J.~Köhler, M.~A. Müller, and F.~Allgöwer, ``Data-driven model predictive control with stability and robustness guarantees,'' \emph{IEEE Transactions on Automatic Control}, vol.~66, no.~4, pp. 1702--1717, 2021.

\bibitem{Kohler2021A}
J.~Köhler, R.~Soloperto, M.~A. Müller, and F.~Allgöwer, ``A computationally efficient robust model predictive control framework for uncertain nonlinear systems,'' \emph{IEEE Transactions on Automatic Control}, vol.~66, no.~2, pp. 794--801, 2021.

\bibitem{Cannon2011Stochastic}
M.~Cannon, B.~Kouvaritakis, S.~V. Raković, and Q.~Cheng, ``Stochastic tubes in model predictive control with probabilistic constraints,'' \emph{IEEE Transactions on Automatic Control}, vol.~56, no.~1, pp. 194--200, 2011.

\bibitem{Lorenzen2017Constraint}
M.~Lorenzen, F.~Dabbene, R.~Tempo, and F.~Allgöwer, ``Constraint-tightening and stability in stochastic model predictive control,'' \emph{IEEE Transactions on Automatic Control}, vol.~62, no.~7, pp. 3165--3177, 2017.

\bibitem{clouatre2022model}
M.~Clouatre, M.~J. Khojasteh, and M.~Z. Win, ``Model-predictive quantum control via {Hamiltonian} learning,'' \emph{IEEE Transactions on Quantum Engineering}, vol.~3, p. 4100623, 2022.

\bibitem{goldschmidt2022model}
A.~J. Goldschmidt, J.~L. DuBois, S.~L. Brunton, and J.~N. Kutz, ``Model predictive control for robust quantum state preparation,'' \emph{Quantum}, vol.~6, p. 837, 2022.

\bibitem{hashimoto2017stability}
T.~Hashimoto, ``Stability of stochastic model predictive control for {S}chr{\"o}dinger equation with finite approximation,'' \emph{International Journal of Electrical and Information Engineering}, vol.~11, no.~1, pp. 12 -- 17, 2017.

\bibitem{hashimoto2013probabilistic}
{T. Hashimoto}, ``Probabilistic constrained model predictive control for linear discrete-time systems with additive stochastic disturbances,'' in \emph{52nd IEEE Conference on Decision and Control}, 2013, pp. 6434--6439.

\bibitem{humaloja2018linear}
J.~Humaloja and S.~Dubljevic, ``Linear model predictive control for schr{\"o}dinger equation,'' in \emph{Proc. IEEE Amer. Control Conf.}, 2018, pp. 2569--2574.

\bibitem{zhang2017quantum}
J.~Zhang, Y.-x. Liu, R.-B. Wu, K.~Jacobs, and F.~Nori, ``Quantum feedback: theory, experiments, and applications,'' \emph{Physics Reports}, vol. 679, pp. 1--60, 2017.

\bibitem{uys2018quantum}
H.~Uys, H.~Bassa, P.~Du~Toit, S.~Ghosh, and T.~Konrad, ``Quantum control through measurement feedback,'' \emph{Physical Review A}, vol.~97, no.~6, p. 060102, 2018.

\bibitem{Bouten2007An}
L.~Bouten, R.~{van}~Handel, and M.~R. James, ``An introduction to quantum filtering,'' \emph{SIAM Journal on Control and Optimization}, vol.~46, no.~6, pp. 2199--2241, 2007.

\bibitem{Mirrahimi2007Stabilizing}
M.~Mirrahimi and R.~van Handel, ``Stabilizing feedback controls for quantum systems,'' \emph{SIAM Journal on Control and Optimization}, vol.~46, no.~2, pp. 445--467, 2007.

\bibitem{lee2024robust}
Y.~Lee, I.~R. Petersen, and D.~Dong, ``Robust quantum control via a model predictive control strategy,'' \emph{quant-ph, arXiv:2402.07396}, 2024.

\bibitem{breuer2002theory}
H.-P. Breuer and F.~Petruccione, \emph{The theory of open quantum systems}.\hskip 1em plus 0.5em minus 0.4em\relax Oxford University Press, 2002.

\bibitem{yip2018quantum}
K.~W. Yip, T.~Albash, and D.~A. Lidar, ``Quantum trajectories for time-dependent adiabatic master equations,'' \emph{Physical Review A}, vol.~97, no.~2, p. 022116, 2018.

\bibitem{childs2019nearly}
A.~M. Childs and Y.~Su, ``Nearly optimal lattice simulation by product formulas,'' \emph{Physical review letters}, vol. 123, no.~5, p. 050503, 2019.

\bibitem{berberich2024robustness}
J.~Berberich, D.~Fink, and C.~Holm, ``Robustness of quantum algorithms against coherent control errors,'' \emph{Physical Review A}, vol. 109, no.~1, p. 012417, 2024.

\bibitem{ortega2000iterative}
J.~M. Ortega and W.~C. Rheinboldt, \emph{Iterative solution of nonlinear equations in several variables}.\hskip 1em plus 0.5em minus 0.4em\relax SIAM, 2000.

\bibitem{yang2013exploring}
F.~Yang, S.~Cong, R.~Long, T.-S. Ho, R.~Wu, and H.~Rabitz, ``Exploring the transition-probability-control landscape of open quantum systems: Application to a two-level case,'' \emph{Physical Review A}, vol.~88, no.~3, p. 033420, 2013.

\bibitem{Kimura2003The}
G.~Kimura, ``The {Bloch} vector for n-level systems,'' \emph{Physics Letters A}, vol. 314, no.~5, pp. 339--349, 2003.

\bibitem{Lee2024Model}
Y.~Lee, I.~R. Petersen, and D.~Dong, ``Model predictive control of two-level open quantum systems,'' in \emph{2024 IEEE 63rd Conference on Decision and Control (CDC)}, 2024, pp. 25--30.

\end{thebibliography}



\appendix
\section{Lower Bound on the Probability for a Time-Independent Hamiltonian}
\label{sec: Appendix A}

Let the Hamiltonian \( \tilde H \) and perturbation \( \tilde H_\Delta \) be fixed during any sampling time \( T_s(k) \). The nominal evolution and the system affected by noise at the next step are given by:
\begin{align}
\label{eqn: N nominal}
    |\psi_{1|k} \rangle &= e^{-i \tilde H T_s} |\psi_k \rangle =: U |\psi_k \rangle, \\
\label{eqn: N real}
    |\psi_{k+1} \rangle &= e^{-i (\tilde H + \tilde H_\Delta) T_s} |\psi_k \rangle =: V |\psi_k \rangle,
\end{align}
where \( U \) and \( V \) are unitary operators corresponding to the nominal and perturbed evolutions, respectively.

\begin{thm}
Let 
\begin{equation}
    \Gamma = \left\| \mathbb{I} - e^{i \tilde H_\Delta T_s} \right\| + \dfrac{ T_s^2 }{2} \left\| [\tilde H,\tilde H_\Delta ] \right\|.
\end{equation}
If \(\Gamma < \sqrt{2}\), then the probability \( 1-p \) of transitioning to the nominal state \( |\psi_{1|k} \rangle \) satisfies the inequality:
\begin{equation}
    1-p \geq \left( 1 - \dfrac{1}{2} \Gamma^2 \right)^2.
\end{equation}
\end{thm}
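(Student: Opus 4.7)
The plan is to reduce the bound on the success probability $1-p = |\langle \psi_{1|k}|\psi_{k+1}\rangle|^2 = |\langle \psi_k | U^\dagger V |\psi_k\rangle|^2$ to a bound on the operator norm $\|V - U\|$, and then to recognize $\Gamma$ as a natural triangle-inequality estimate of that norm. Concretely, I would first rewrite $V = e^{-i(\tilde H + \tilde H_\Delta)T_s}$ in comparison with the product $U\, e^{-i\tilde H_\Delta T_s} = e^{-i\tilde H T_s} e^{-i\tilde H_\Delta T_s}$, and then peel off the perturbation factor on the right to isolate the identity. This splits $V-U$ into two easily controlled pieces: a Trotter-splitting error and a ``small-rotation'' term $U(e^{-i\tilde H_\Delta T_s} - \mathbb{I})$.

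The second step is to invoke the standard first-order Lie--Trotter estimate
\begin{equation*}
    \bigl\| e^{-i(A+B)t} - e^{-iAt}e^{-iBt} \bigr\| \leq \tfrac{t^2}{2}\,\|[A,B]\|,
\end{equation*}
applied with $A = \tilde H$, $B = \tilde H_\Delta$, $t = T_s$, which gives the commutator term $\tfrac{T_s^2}{2}\|[\tilde H,\tilde H_\Delta]\|$ in $\Gamma$. The other piece is bounded using unitary invariance of the operator norm: $\|U e^{-i\tilde H_\Delta T_s} - U\| = \|e^{-i\tilde H_\Delta T_s} - \mathbb{I}\| = \|\mathbb{I} - e^{i\tilde H_\Delta T_s}\|$. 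Combining both pieces via the triangle inequality yields $\|V - U\| \leq \Gamma$, and hence $\| (V-U)|\psi_k\rangle\| \leq \Gamma$ for the unit vector $|\psi_k\rangle$.

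The final step is to convert this vector distance into an inner-product lower bound. For unit vectors $|\psi_{1|k}\rangle$ and $|\psi_{k+1}\rangle$, the identity
\begin{equation*}
    \bigl\| |\psi_{k+1}\rangle - |\psi_{1|k}\rangle \bigr\|^2 = 2 - 2\,\mathrm{Re}\,\langle \psi_{1|k}|\psi_{k+1}\rangle
\end{equation*}
forces $\mathrm{Re}\,\langle \psi_{1|k}|\psi_{k+1}\rangle \geq 1 - \tfrac{1}{2}\Gamma^2$. The hypothesis $\Gamma < \sqrt{2}$ guarantees this quantity is strictly positive, so $|\langle \psi_{1|k}|\psi_{k+1}\rangle| \geq \mathrm{Re}\,\langle \psi_{1|k}|\psi_{k+1}\rangle \geq 1 - \tfrac{1}{2}\Gamma^2$. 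Squaring gives the claimed bound $1-p \geq (1 - \tfrac{1}{2}\Gamma^2)^2$.

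I expect the main obstacle to be the clean invocation of the Trotter-splitting estimate with the correct constant $T_s^2/2$; the rest of the argument is a direct sequence of triangle inequalities and the elementary geometric fact linking vector distance and inner product. If a self-contained derivation of the Trotter bound is desired, one can obtain it from the fundamental theorem of calculus applied to $f(s) = e^{-iAs}e^{-iBs}e^{i(A+B)s}$, whose derivative is of order $s\|[A,B]\|$, integrated over $[0,T_s]$; this would be the only technical detour away from citing \cite{childs2019nearly}.
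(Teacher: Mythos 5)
Your proposal is correct and follows essentially the same route as the paper's Appendix~A proof: both reduce the success probability to the operator-norm distance between the nominal and perturbed propagators (your $\|V-U\|$ equals the paper's $\|\mathbb{I}-U^\dagger V\|$ by unitary invariance), split it via the triangle inequality into the small-rotation term $\|\mathbb{I}-e^{i\tilde H_\Delta T_s}\|$ plus the first-order Trotter commutator error from \cite{childs2019nearly}, and convert back to the fidelity through $\mathrm{Re}\,\langle\psi_{1|k}|\psi_{k+1}\rangle \geq 1-\tfrac{1}{2}\Gamma^2$ under the hypothesis $\Gamma<\sqrt{2}$. No gaps.
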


\begin{proof}
With equations \eqref{eqn: N nominal} and \eqref{eqn: N real}, we can represent the probability \( 1-p \) of transferring to the nominal state \( |\psi_{1|k} \rangle \) as follows:

\begin{align}
\begin{aligned}
      1-p &= \left| \langle \psi_{1|k} \mid \psi_{k+1} \rangle \right|^2 = \left| \langle \psi_k \mid U^\dagger V \mid \psi_k \rangle \right|^2  \\
        &\geq \left[ \Re\left\{ \langle \psi_k \mid U^\dagger V \mid \psi_k \rangle \right\} \right]^2  \\
        &= \left( \frac{ \operatorname{Tr}\left( \rho_k (U^\dagger V + V^\dagger U)\rho_k^\dagger \right) }{2} \right)^2  \\
        &= \left\| \frac{ \rho_k (U^\dagger V + V^\dagger U)\rho_k^\dagger }{2} \right\|^2,
\end{aligned}
\end{align}
where \( \|\cdot\| \) is the spectral norm and we define \( \rho_k = | \psi_k \rangle \langle \psi_k | \).

By defining

\[
    \tilde{I} := \left\| \rho_k (U^\dagger V + V^\dagger U)\rho_k^\dagger \right\|,
\]
it can be found that
\begin{align}
\begin{aligned}
    2 - \tilde{I} &= \left\| \rho_k \left( 2 \mathbb{I} - (U^\dagger V + V^\dagger U) \right)\rho_k^\dagger \right\| \\
    &= \left\| (U - V) \rho_k \right\|^2.
\end{aligned}
\end{align}
Consequently, the probability satisfies the following inequality:
\begin{equation}
    1-p \geq \left( \frac{ 2 - \left\| (U - V) \rho_k \right\|^2 }{2} \right)^2.
\end{equation}
It is noted that \( \left\| (U - V) \rho_k \right\| = \left\| (\mathbb{I} - U^\dagger V) \rho_k \right\| \).

We now consider the triangle inequality for matrices \( A, B, C \) satisfying
\begin{align}
   \|A - B\| \leq \|C\|.
\end{align}
Applying the reverse triangle inequality yields:
\begin{align}
\begin{aligned}
     \left| \| \mathbb{I} - A \| - \| \mathbb{I} - B \| \right| &\leq \| (\mathbb{I} - A) - (\mathbb{I} - B) \| \\
     &= \| A - B \| \leq \| C \|.
\end{aligned}
\end{align}
Thus, we obtain the following bound:
\begin{equation}
\label{eqn:tri-inequality}
    \| \mathbb{I} - A \| \leq \| \mathbb{I} - B \| + \| C \|.
\end{equation}

Following the approach in~\cite{childs2019nearly}, we set \( A = U^\dagger V \), \( B = e^{i \tilde{H}_\Delta T_s} \), and \( C = \frac{T_s^2}{2} \| [\tilde{H}, \tilde{H}_\Delta] \| \), where \( U = e^{-i \tilde{H} T_s} \) and \( V = e^{-i (\tilde{H} + \tilde{H}_\Delta) T_s} \).

Substituting them into~\eqref{eqn:tri-inequality}, we find
\begin{align}
    \| (\mathbb{I} - U^\dagger V) \rho \| \leq \| \mathbb{I} - U^\dagger V \| \leq \Gamma,
\end{align}
where \( \Gamma := \left\| \mathbb{I} - e^{i \tilde{H}_\Delta T_s} \right\| + \dfrac{T_s^2}{2} \| [\tilde{H}, \tilde{H}_\Delta] \| \).

Under the condition \( \Gamma < \sqrt{2} \), the corresponding lower bound on the probability \(1 - p\) remains meaningful. Specifically, we have
\begin{align}
    1 - p \geq \left( 1 - \dfrac{1}{2} \Gamma^2 \right)^2.
\end{align}
\end{proof}


\section{Quantum Trajectory Method for Stochastic Unraveling of Quantum Dynamics}
\label{sec: Appendix B}

To introduce the stochastic unraveling of quantum dynamics \cite{yip2018quantum}, the deterministic evolution (no jump trajectory) of the system is characterized by the Schrödinger equation with a non-Hermitian Hamiltonian:
\begin{align}
\label{eqn: deterministic evolution}
    \frac{d}{d t}|\tilde{\psi_d}(t)\rangle=-i H_{\text {eff }}|\tilde{\psi_d}(t)\rangle,
\end{align}
where \(|\tilde{\psi_d}(t)\rangle\) denotes the unnormalized state vector, and the effective Hamiltonian is defined as:
\begin{align}
    H_{\text {eff }}=H(t)-\frac{i}{2} \sum_{k=1}^K L_k^{\dagger} L_k ,
\end{align}
where the value of \(K\) is determined by the number of Lindblad operators.

During an infinitesimal time increment from \(t\) to \(t+d t\), two potential evolutions are possible for \(|\tilde{\psi}(t)\rangle\): either following the deterministic path as \eqref{eqn: deterministic evolution} (with probability \(1-d p\)) or experiencing a quantum jump (with probability \(d p\)). The probability of a jump occurring is computed as:
\begin{align}
    d p=\sum_{k=1}^K\left\langle L_k^{\dagger} L_k\right\rangle d t .
\end{align}
In the event of a jump, the unnormalized state undergoes an update:
\begin{align}
    |\tilde{\psi}_{\text{jump}}(t+d t)\rangle=L_i|\tilde{\psi}(t)\rangle,
\end{align}
where \(L_i\) is selected randomly from the set \(\left\{L_k\right\}_{k=1}^K\) with the associated probability:
\begin{align}
    p_i=\left\langle\tilde{\psi}(t)\left|L_i^{\dagger} L_i\right| \tilde{\psi}(t)\right\rangle / \sum_{k=1}^K\left\langle\tilde{\psi}(t)\left|L_k^{\dagger} L_k\right| \tilde{\psi}(t)\right\rangle .
\end{align}
Thus, the system evolves to the next state as follows:
\begin{align}
\label{eqn: deterministic and jump}
    \rho (t+dt) = (1-dp) \rho_d (t+dt) +  \sum_i p_i \rho_{\text{jump},i} (t+dt),
\end{align}
where $\rho_d (t+dt) = |\psi_d (t+dt)\rangle \langle \psi_d (t+dt) |$, and $|\psi_d (t+dt)\rangle$ is the normalized state corresponding to $|\tilde\psi_d (t+dt)\rangle$. Similarly, $\rho_{\text{jump},i} (t+dt) = |\psi_{\text{jump},i} (t+dt)\rangle \langle \psi_{\text{jump},i} (t+dt) |$, where $|\psi_{\text{jump},i} (t+dt)\rangle$ is the normalized state corresponding to $|\tilde\psi_{\text{jump},i} (t+dt)\rangle$.

For discrete-time analysis over a sampling interval \(T_s\), as extensively discussed in \cite[Appendix C]{yip2018quantum}, the probability of no quantum jump occurring within a finite interval \(T_s\) (which may not be infinitesimally small) is given by:
\begin{align}
\label{eqn: no jump traj prob}
\text{Tr} (\tilde\rho(t+T_s))
     = \exp \left(-\int_t^{t+T_s} \sum_k\left\langle L_k^{\dagger} L_k\right\rangle d s \right),
\end{align}
where $\tilde\rho(t+T_s))=|\tilde{\psi}(t+T_s)\rangle\langle \tilde{\psi}(t+T_s)|$.

\section{Proof of Lemma \ref{lemma: k step}}
\label{sec:proof_Lemma}
We prove the expectation $\langle \psi_{l|k}|\psi_{k+l}\rangle$ for $l$-step, satisfying
\begin{align}
    |\langle \psi_{l|k}|\psi_{k+l}\rangle| \geq \cos (l \bar \Delta T_s) ,
\end{align}
by utilizing the principle of mathematical induction. 

For convenience, the nominal state evolution is represented by:
\begin{align}
   U_i = e^{-i H_i T_s},
\end{align}
and the real evolution is:
\begin{align}
    W_i = e^{-i (H_i+H_{\Delta,i}) T_s},
\end{align}
where \(i\) represents the next $i$-step evolution.
Therefore, the overall expectation value is:
\begin{align}
    E(l)=\langle \psi_k | U_1^{\dagger} U_2 ^{\dagger}  \cdots U_l^{\dagger}W_l \cdots W_2 W_1|\psi_k \rangle .
\end{align}

\textbf{Base Case:} 
Firstly, we discuss the case when $l=2$. The expectation can be extended as 
\begin{align}
\begin{aligned}
       \langle \psi_{2|k}|\psi_{k+2}\rangle=&\langle \psi_{k}| U_1 U_2 W_2 W_1|\psi_{k}\rangle\\
       =&\langle \psi_{2|k}| W_2 |\psi_{1|k}\rangle \langle \psi_{1|k}| W_1|\psi_{k}\rangle\\
       &+\langle \psi_{2|k}| W_2 |\psi_{1|k}^{\perp}\rangle \langle \psi_{1|k}^{\perp}| W_1|\psi_{k}\rangle ,
\end{aligned}
\end{align}
since we have known $|\langle \psi_{1|k}| W_1|\psi_{k}\rangle|>\cos({\bar{\Delta}T_s})$, where $\omega_1, \xi_1 \in \mathbb{R}$ represent the phase. Besides, in the two-level system, when states are orthogonal, they satisfy  
\begin{align}
    | \langle \psi_{1|k}| W_1|\psi_{k}\rangle |^2 + | \langle \psi_{1|k}^{\perp}| W_1|\psi_{k}\rangle |^2 = 1.
\end{align}
We can assume $\langle \psi_{1|k}| W_1|\psi_{k}\rangle = \cos\phi_1 e^{i\omega_1}$, then $\langle \psi_{1|k}^{\perp}| W_1|\psi_{k}\rangle =\sin\phi_1 e^{i\xi_1}$.
The same idea can be assumed to $\langle \psi_{2|k}| W_2 |\psi_{1|k}\rangle = \cos\phi_2 e^{i\omega_2}$, $\langle \psi_{2|k}| W_2 |\psi_{1|k}^{\perp}\rangle = \pm \sin\phi_2 e^{i\xi_2}$.
Therefore, 
\begin{align}
|\langle \psi_{2|k}|\psi_{k+2}\rangle |
   \!=\! \big| \! \cos\!\phi_2 \! \cos\!\phi_1 e^{i(\!\omega_2\!+\!\omega_1\!)}
     \!+\! \sin\!\phi_2 \! \sin\!\phi_1 e^{i ( \!\xi_2\!+\!\xi_1 \!)} \! \big| .
\end{align}
If we take the first term and the second term as opposite directions, they are the smallest value by triangle inequality. Besides, since $|\langle \psi_{l|k}| W_l|\psi_{l-1|k}\rangle|>\cos({\bar{\Delta}T_s})$, we can assume $\phi_1, \phi_2 \leq \bar {\Delta} T_s$. Therefore, 
\begin{align}
\begin{aligned}
       &|\langle \psi_{2|k}|\psi_{k+2}\rangle| \geq \cos (\phi_2 + \phi_1) \geq \cos (2 \bar {\Delta}T_s) .
\end{aligned}
\end{align}

\textbf{Inductive Step:} We assume that $|\langle \psi_{l|k}|\psi_{k+l}\rangle|$ is true for some arbitrary positive integer \(l\), i.e.,
\begin{align}
    |\langle \psi_{l|k}|\psi_{k+l}\rangle| \geq \cos (l \bar \Delta T_s).
\end{align}
As in the base case, we can assume 
\begin{align}
    \langle \psi_{l|k}|\psi_{k+l}\rangle=\cos (\sum_{i=1}^l \phi_i) e^{i \sum_{i=1}^{l+1} \omega_i},
\end{align} 
and the corresponding orthogonal measurement 
\begin{align}
    \langle \psi_{l|k}^{\perp}|\psi_{k+l}\rangle = \sin (\sum_{i=1}^l \phi_i) e^{i \sum_{i=1}^l \xi_i}
\end{align}
by selecting all $\phi_i < \bar \Delta T_s$.

Then, considering the scenario for \(l+1\), we compute
\begin{align}
\begin{aligned}
     \langle \psi_{l+1|k}|\psi_{k+l+1}\rangle=&\langle \psi_{l+1|k} \prod_{j=0}^{l} W_{l+1-j}|\psi_{k}\rangle\\
   =&\langle \psi_{l+1|k}| W_{l+1} |\psi_{l|k}\rangle \langle \psi_{l|k}|\psi_{k+l}\rangle\\
   &+\langle \psi_{l+1|k}| W_{l+1} |\psi_{l|k}^{\perp}\rangle \langle \psi_{l|k}^{\perp}|\psi_{k+l}\rangle\\
   =& \left|\cos\phi_{l+1} \cos (\sum_{i=1}^l \phi_i) e^{i \sum_{i=1}^{l+1} \omega_i} \right. \\
   &\quad + \left.\sin\phi_{l+1} \sin (\sum_{i=1}^l \phi_i) e^{i \sum_{i=1}^{l+1} \xi_i} \right|\\
   \geq& \cos (\sum_{i=1}^{l+1} \phi_i).
\end{aligned}
\end{align}

By the assumption \( \phi_i \leq \bar{\Delta} T_s \) for all \(i\), we can observe
\begin{align}
    |\langle \psi_{l+1|k}|\psi_{k+l+1}\rangle| \geq \cos ((l+1) \bar \Delta T_s).
\end{align}
Therefore, we establish that if \(|\langle \psi_{l|k}|\psi_{k+l}\rangle|\) holds, then \(|\langle \psi_{l+1|k}|\psi_{k+l+1}\rangle|\) holds as well.

Hence, by the principle of mathematical induction, it follows that \(|\langle \psi_{l+1|k}|\psi_{k+l+1}\rangle|\) holds for all positive integers \(l\), under the condition that \(l \bar{\Delta} T_s \leq \frac{\pi}{2}\).

\section{Minimum Success Probability for Different Decoherence}
\label{sec: MSP 2 level}

\subsection{Depolarizing decoherence}

In a two-level quantum system with depolarizing decoherence, the probability of transferring to the correct nominal state $\rho_{T_s|t}$ satisfies the lower bound:
\begin{align}
    \label{eqn: distance prob. DD}
    \text{Tr} (\rho_{t+T_s} \rho_{T_s|t}) \geq \frac{1}{2} + \frac{1}{2} e^{- \bar{\gamma} T_s},
\end{align}
where \(\bar{\gamma}\) is an upper bound on the Lindblad coefficients such that \(\gamma(t) \leq \bar{\gamma}\) for all times \(t\). The state \(\rho_{t+T_s}\) evolves from \(\rho_t\) according to the Lindblad equation \eqref{eqn:Lindblad_equation}.

If the system includes an uncertain Hamiltonian \( H_\Delta \), the probability of transitioning to the correct nominal state satisfies:
\begin{align}
\label{eqn: distance prob. uncertain DD}
\operatorname{Tr}\left( \rho_{t+T_s} \, \rho_{T_s|t} \right) \geq \frac{1}{2} \cos^2( \bar{\Delta} T_s ) \left( 1 + e^{- \bar{\gamma} T_s } \right)
\end{align}
under the conditions \( \gamma(t) \leq \bar{\gamma} \) and \( \bar{\Delta} T_s \leq \dfrac{\pi}{2} \). This probability has been proven in \cite{Lee2024Model}.

\subsection{Phase-Damping Decoherence}
Phase-damping decoherence in a two-level quantum system can be described by a Lindblad operator \(L\) satisfying
\begin{align}
    L^\dagger L = \gamma(t)\mathbb{I}.
\end{align}
In this case, the probability of transferring to the nominal state $\rho_{T_s|t}$ obeys
\begin{align}
\label{eqn: distance prob. PD}
    \mathrm{Tr}(\rho_{t+T_s}\rho_{T_s|t}) \;\geq\; e^{-\bar\gamma T_s},
\end{align}
whenever $\gamma(t)\leq \bar\gamma$.

If the two-level quantum system with phase-damping decoherence also includes an uncertain Hamiltonian \(H_\Delta\), the probability of transferring to the correct nominal state $\rho_{T_s|t}$ satisfies the following lower bound:
\begin{align}
\label{eqn: distance prob. uncertain PD}
    \text{Tr} (\rho_{t+T_s} \rho_{T_s|t}) \geq \cos^2(\bar{\Delta} T_s ) e^{-\bar\gamma T_s}
\end{align}
under the conditions that $\gamma(t)\leq \bar \gamma$ and $\bar{\Delta} T_s  < \pi/2$.

\begin{proof}
    We assume that a state $\rho'_{t + T_s}$ is obtained from the system evolving only under the uncertain Hamiltonian, where
    \begin{align}
        \rho'_{t + T_s} = U_\Delta \rho_t U^\dagger_\Delta,
    \end{align}
    and 
    \begin{align}
        U_\Delta = e^{-i(H + H_\Delta) T_s}.
    \end{align}
    Then,
    as described in \eqref{eqn: 28}, we can establish a lower bound for the probability of transferring to the correct nominal state as follows:
    \begin{align}
    \label{eqn: 32}
    \begin{aligned}
        \text{Tr} (\rho_{t + T_s} \rho_{T_s|t}) &\geq e^{- \int_{t}^{t + T_s} \gamma(s) \, ds} \text{Tr} (\rho'_{t+T_s} \rho_{T_s|t}) \\
        &\geq \cos^2(\bar{\Delta} T_s) e^{-\bar\gamma T_s },
    \end{aligned}
    \end{align}
    where $\rho'_{t + T_s}$ denotes the state corresponding to the deterministic trajectory under the uncertain Hamiltonian. Therefore, applying the result from Proposition \ref{prop: 2 level}, the final inequality in \eqref{eqn: 32} is obtained.
\end{proof}

\subsection{Amplitude-Damping Decoherence}
In the two-level quantum system with amplitude-damping decoherence, the probability of transferring to the correct nominal state $\rho_{T_s|t}$ satisfies
\begin{align}
\label{eqn: distance prob. AD}
    \text{Tr} (\rho_{t + T_s} \rho_{T_s|t}) \geq 1 - \bar{\gamma} T_s
\end{align}
under the condition $\gamma(t) \leq \bar{\gamma}$.

\begin{cor}
    If the two-level quantum system with amplitude-damping decoherence also includes an uncertain Hamiltonian \( H_\Delta \), the probability of transferring to the correct nominal state $\rho_{T_s|t}$ satisfies the lower bound
\begin{align}
\label{eqn: distance prob. uncertain AD}
    \text{Tr} (\rho_{t + T_s} \rho_{T_s|t}) \geq \cos^2(\bar{\Delta} T_s ) (1 - \bar{\gamma} T_s) - \frac{1}{2} \sin(2 \bar{\Delta}T_s)
\end{align}
under the condition $\gamma(t) \leq \bar{\gamma}$ and \(\bar{\Delta}T_s  \leq \pi/4\).
\end{cor}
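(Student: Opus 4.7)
The approach is to combine the proof techniques used for (a) the pure amplitude-damping bound in~\eqref{eqn: distance prob. AD}, (b) the two-level uncertain-Hamiltonian bound of Proposition~\ref{prop: 2 level}, and (c) the composition argument used for phase-damping in the proof leading to~\eqref{eqn: distance prob. uncertain PD}. The key structural difference from the phase-damping case is that the amplitude-damping jump operator $L=\sigma_-$ satisfies $L^\dagger L = |1\rangle\langle 1|$, which is not proportional to the identity and therefore does not commute with rotations generated by $H_\Delta$. This non-commutativity is precisely what produces the additional subtractive term $-\tfrac{1}{2}\sin(2\bar\Delta T_s)$ relative to the naive multiplicative bound.

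First, I would apply the stochastic unraveling of Appendix~\ref{sec: Appendix B} to split the Lindblad dynamics into a no-jump deterministic branch and jump branches. Adopting the worst-case scenario from the proof of Theorem~\ref{thm: general min suc prob}, in which all jump states are assumed orthogonal to the nominal state, I obtain
\begin{align}
    \mathrm{Tr}(\rho_{t+T_s}\rho_{T_s|t}) \geq \mathrm{Tr}(\tilde\rho_d\,\rho_{T_s|t}),
\end{align}
where $\tilde\rho_d = V_{\mathrm{eff}}\rho_t V_{\mathrm{eff}}^\dagger$ with $V_{\mathrm{eff}} = \exp\bigl(-i\int_t^{t+T_s}(H(s)+H_\Delta - \tfrac{i}{2}\gamma(s)L^\dagger L)\,ds\bigr)$.

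Next, I would introduce the intermediate reference state $\rho'_{t+T_s}$ obtained by evolving under $H+H_\Delta$ alone (no damping) and perform a Trotter-type splitting of the deterministic evolution into a unitary piece generated by $H+H_\Delta$ followed by a contracting piece generated by $-\tfrac{1}{2}\gamma L^\dagger L$. Proposition~\ref{prop: 2 level} yields $\mathrm{Tr}(\rho'_{t+T_s}\rho_{T_s|t}) \geq \cos^2(\bar\Delta T_s)$, and applying the pure amplitude-damping argument behind~\eqref{eqn: distance prob. AD} to the contracted state contributes the factor $(1-\bar\gamma T_s)$. Composing these produces the leading multiplicative term $\cos^2(\bar\Delta T_s)(1-\bar\gamma T_s)$.

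The main obstacle will be controlling the Trotter error: because $[H_\Delta, L^\dagger L]\neq 0$, the order of rotation and damping matters, and the projection of the damping action onto the rotated nominal direction introduces off-diagonal coupling. Expressing $L^\dagger L$ in the basis rotated by $H_\Delta$ exposes an off-diagonal matrix element of magnitude at most $\sin(2\bar\Delta T_s)/2$, which must be absorbed as a negative correction to the lower bound. The restriction $\bar\Delta T_s \leq \pi/4$ keeps this penalty below $1/2$ so that the overall lower bound remains meaningful; it also places the analysis inside the monotonically increasing regime of $\sin(2\bar\Delta T_s)$, enabling the clean worst-case estimate. The remaining manipulations are first-order expansions in $T_s$ analogous to those used for the other decoherence cases.
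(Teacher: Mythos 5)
Your proposal correctly assembles the three ingredients (the pure amplitude-damping population bound, Proposition~\ref{prop: 2 level}, and the role of the condition \(\bar\Delta T_s\le\pi/4\) as the monotonicity regime of \(\sin(2\bar\Delta T_s)\)), but the mechanism you assign to the correction term \(-\tfrac12\sin(2\bar\Delta T_s)\) is not the one that produces it, and the route you sketch would not reach the stated bound. A Trotter splitting of the no-jump evolution would generate an error controlled by \(T_s^2\|[H_\Delta,\gamma L^\dagger L]\|\), i.e.\ a quantity of order \(\bar\gamma\bar\Delta T_s^2\) that vanishes as \(\bar\gamma\to 0\). The correction in the corollary does \emph{not} vanish with \(\bar\gamma\): at \(\bar\gamma=0\) the stated bound is \(\cos^2(\bar\Delta T_s)-\tfrac12\sin(2\bar\Delta T_s)\), strictly weaker than Proposition~\ref{prop: 2 level}. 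So the subtractive term cannot be a commutator/Trotter error; chasing it that way is a dead end. Moreover, your first reduction (worst-case orthogonal jumps, then work only with \(\tilde\rho_d\)) forces you to track the state-dependent no-jump probability \(\exp(-\int\gamma\langle L^\dagger L\rangle)\) for amplitude damping, which your sketch does not do.

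The paper's proof avoids unraveling the damped dynamics altogether. It keeps the full (generally mixed) state \(\rho_{t+T_s}\) and inserts a pseudo-measurement in the basis \(\{|\psi'_{t+T_s}\rangle,|\psi'_{\perp,t+T_s}\rangle\}\) defined by evolving \(|\psi_t\rangle\) under \(H+H_\Delta\) alone, expanding \(\langle\psi_{T_s|t}|\rho_{t+T_s}|\psi_{T_s|t}\rangle\) into two diagonal and two cross terms. Proposition~\ref{prop: 2 level} gives \(|\langle\psi_{T_s|t}|\psi'_{t+T_s}\rangle|=\cos\phi\) with \(\phi\le\bar\Delta T_s\); the pure amplitude-damping result gives \(\langle\psi'_{t+T_s}|\rho_{t+T_s}|\psi'_{t+T_s}\rangle=1-p'_\gamma\ge 1-\bar\gamma T_s\); and the cross terms are bounded, using positivity of \(\rho_{t+T_s}\), by
\begin{align}
  2\sqrt{(1-p'_\gamma)\,p'_\gamma}\,\cos\phi\sin\phi \;\le\; \tfrac12\sin(2\phi),
\end{align}
since \(\sqrt{(1-p'_\gamma)p'_\gamma}\le\tfrac12\) uniformly in \(p'_\gamma\). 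That crude \(\tfrac12\) is exactly why the correction survives as \(\bar\gamma\to0\). If you want to salvage your outline, replace the Trotter step with this pseudo-measurement expansion and bound the interference terms directly on the density matrix.
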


\begin{proof}
    We assume a state $\rho'_{t + T_s} = |\psi'_{t + T_s}\rangle \langle \psi'_{t + T_s}|$, which is obtained from
    \begin{align}
        |\psi'_{t + T_s}\rangle = e^{-i \int_{t}^{t + T_s} (H(s) + H_\Delta(s)) \, ds} |\psi_t\rangle.
    \end{align}
    Then, we add this pseudo-measurement to obtain
    \begin{align}
    \label{eqn: expand probability}
    \begin{aligned}
        & \text{Tr} (\rho_{t + T_s} \rho_{T_s|t}) = \langle \psi_{T_s|t} | \rho_{t + T_s} | \psi_{T_s|t} \rangle \\
        & = \langle \psi_{T_s|t} |\psi'_{t + T_s}\rangle \langle \psi'_{t + T_s} | \rho_{t + T_s} | \psi'_{t + T_s} \rangle \langle \psi'_{t + T_s} | \psi_{T_s|t} \rangle \\
        & + \langle \psi_{T_s|t} |\psi'_{\perp,t + T_s} \rangle \langle \psi'_{\perp,t + T_s} | \rho_{t + T_s} | \psi'_{\perp,t + T_s} \rangle \langle \psi'_{\perp,t + T_s} | \psi_{T_s|t} \rangle \\
        & + \langle \psi_{T_s|t} |\psi'_{\perp,t + T_s} \rangle \langle \psi'_{\perp,t + T_s} | \rho_{t + T_s} | \psi'_{t + T_s} \rangle \langle \psi'_{t + T_s} | \psi_{T_s|t} \rangle \\
        & + \langle \psi_{T_s|t} |\psi'_{t + T_s} \rangle \langle \psi'_{t + T_s} | \rho_{t + T_s} | \psi'_{\perp,t + T_s} \rangle \langle \psi'_{\perp,t + T_s} | \psi_{T_s|t} \rangle.
    \end{aligned}
    \end{align}
    Here, we assume 
    \begin{align}
        \langle \psi_{T_s|t} |\psi'_{t + T_s}\rangle = \cos(\phi) e^{i\omega}
    \end{align} and
    \begin{align}
        \langle \psi_{T_s|t} |\psi'_{\perp,t + T_s}\rangle = \sin(\phi) e^{i\xi},
    \end{align}
    where $\phi \leq T_s \bar{\Delta}$. Then, we assume that the orthogonal state is 
    \begin{align}
        |\psi'_{\perp,t + T_s}\rangle \langle \psi'_{\perp,t + T_s}| = \mathbb{I} - |\psi'_{t + T_s}\rangle \langle \psi'_{t + T_s}|.
    \end{align}
    
    Since \begin{align}
        \langle \psi'_{t + T_s} | \rho_{t + T_s} | \psi'_{t + T_s} \rangle \geq 1 - \bar{\gamma}  T_s,
    \end{align} 
    we may assume there is a $p'_\gamma \leq \bar{\gamma} T_s$ such that 
    \begin{align}
        \langle \psi'_{t + T_s} | \rho_{t + T_s} | \psi'_{t + T_s} \rangle = 1 - p'_\gamma
    \end{align}
    and
    \begin{align}
        \langle \psi'_{\perp,t + T_s} | \rho_{t + T_s} | \psi'_{\perp,t + T_s} \rangle = p'_\gamma.
    \end{align}
    Then, the terms in \eqref{eqn: expand probability} can be bounded as:
    \begin{align}
    \begin{aligned}
        & \langle \psi_{T_s|t} | \psi'_{\perp,t + T_s} \rangle \langle \psi'_{\perp,t + T_s} | \rho_{t + T_s} | \psi'_{t + T_s} \rangle \langle \psi'_{t + T_s} | \psi_{T_s|t} \rangle \\
        & + \langle \psi_{T_s|t} | \psi'_{t + T_s} \rangle \langle \psi'_{t + T_s} | \rho_{t + T_s} | \psi'_{\perp,t + T_s} \rangle \langle \psi'_{\perp,t + T_s} | \psi_{T_s|t} \rangle \\
        & = 2\Re \left( \langle \psi_{T_s|t} | \psi'_{\perp,t + T_s} \rangle \langle \psi'_{\perp,t + T_s} | \rho_{t + T_s} | \psi'_{t + T_s} \rangle \langle \psi'_{t + T_s} | \psi_{T_s|t} \rangle \right) \\
        & \leq 2 \|\langle \psi_{T_s|t} | \psi'_{\perp,t + T_s} \rangle \langle \psi'_{\perp,t + T_s} | \rho_{t + T_s} | \psi'_{t + T_s} \rangle \langle \psi'_{t + T_s} | \psi_{T_s|t} \rangle\| \\
        & = 2 \sqrt{(1 - p'_\gamma) p'_\gamma} \cos(\phi) \sin(\phi) \leq \frac{1}{2} \sin(2 \phi).
    \end{aligned}
    \end{align}
    Then, the probability of transferring to the correct quantum state can be bounded as:
    \begin{align}
    \begin{aligned}
        & \text{Tr} (\rho_{t + T_s} \rho_{T_s|t}) \\
        & \geq \left( \cos^2(\phi) (1 - p'_\gamma) + \sin^2(\phi) p'_\gamma \right) - \frac{1}{2} \sin(2 \phi) \\
        & \geq \cos^2(\bar{\Delta} T_s) (1 - \bar{\gamma} T_s) - \frac{1}{2} \sin(2 \bar{\Delta} T_s ).
    \end{aligned}
    \end{align}
    To obtain the last inequality, we use the assumption \(\bar{\Delta}T_s   \leq \pi/4\). Therefore, the lower bound on the probability of transferring to the correct nominal is established.
\end{proof}


\end{document}